\theoremstyle{thmstyleone}%
\theoremstyle{thmstyletwo}%
\theoremstyle{thmstylethree}%
\definecolor{Gray}{gray}{0.9}
\theoremstyle{plain}
\newtheorem{thm}{\theoremname}
\theoremstyle{definition}
\newtheorem{defn}[thm]{\definitionname}
\theoremstyle{remark}
\theoremstyle{plain}
\theoremstyle{plain}
\providecommand{\definitionname}{Definition}
\providecommand{\lemmaname}{Lemma}
\providecommand{\propositionname}{Proposition}
\providecommand{\remarkname}{Remark}
\providecommand{\theoremname}{Theorem}
\begin{document}

\title[ ]{Lattice Codes for Lattice-Based PKE}

%%=============================================================%%
%% Prefix	-> \pfx{Dr}
%% GivenName	-> \fnm{Joergen W.}
%% Particle	-> \spfx{van der} -> surname prefix
%% FamilyName	-> \sur{Ploeg}
%% Suffix	-> \sfx{IV}
%% NatureName	-> \tanm{Poet Laureate} -> Title after name
%% Degrees	-> \dgr{MSc, PhD}
%% \author*[1,2]{\pfx{Dr} \fnm{Joergen W.} \spfx{van der} \sur{Ploeg} \sfx{IV} \tanm{Poet Laureate} 
%%                 \dgr{MSc, PhD}}\email{iauthor@gmail.com}
%%=============================================================%%

%\author[2]{\fnm{Ling} \sur{Liu}}\email{liulingcs@szu.edu.cn}
%
%\author[3]{\fnm{Cong} \sur{Ling}}\email{c.ling@imperial.ac.uk}
%
%\author[1]{\fnm{Junzuo} \sur{Lai}}\email{laijunzuo@gmail.com}
%
%\author[1]{\fnm{Hao} \sur{Chen}}\email{chenhao@fudan.edu.cn}

\author*[1]{\fnm{Shanxiang} \sur{Lyu}}\email{lsx07@jnu.edu.cn}

\author[2]{\fnm{Ling} \sur{Liu}}

\author[3]{\fnm{Cong} \sur{Ling}}
 
\author[1]{\fnm{Junzuo} \sur{Lai}}
 
\author[1]{\fnm{Hao} \sur{Chen}}

\affil*[1]{\orgdiv{College of Cyber Security}, \orgname{Jinan University}, \orgaddress{ \city{Guangzhou}, \postcode{510632},  \country{China}}}

\affil[2]{\orgdiv{ College of Computer Science and
		Software Engineering}, \orgname{Shenzhen University}, \orgaddress{ \city{Shenzhen}, \postcode{518060},  \country{China}}}

\affil[3]{\orgdiv{Department of Electrical and Electronic Engineering}, \orgname{Imperial College London}, \orgaddress{ \city{London}, \postcode{SW7 2AZ},  \country{United Kingdom}}}

%%==================================%%
%% sample for unstructured abstract %%
%%==================================%%

\abstract{
	Existing error correction mechanisms in lattice-based public key encryption (PKE) rely on either naive modulation or its concatenation with error correction codes (ECC). This paper shows that lattice coding, as a joint ECC and modulation technique, can substitute the naive modulation in existing lattice-based PKEs to enjoy better correction performance. We begin by modeling the FrodoPKE protocol as a noisy point-to-point communication system, where the communication channel is similar to the additive white Gaussian  noise (AWGN) channel. To employ lattice codes for this special channel that hinges on hypercube shaping, we propose an efficient labeling function that converts between binary information bits and lattice codewords. The parameter sets of FrodoPKE are improved towards either higher security levels or smaller ciphertext sizes. For example, the proposed Frodo-1344-E$_\text{8}$ has a 10-bit classical security gain over Frodo-1344.}

\keywords{public key encryption (PKE), lattice-based cryptography (LBC), lattice codes, coded modulation}

%%\pacs[JEL Classification]{D8, H51}

%%\pacs[MSC Classification]{35A01, 65L10, 65L12, 65L20, 65L70}

\maketitle

\section{Introduction}\label{sec1}
The impending realization of scalable quantum computers has posed a great challenge for modern public key cryptosystems. As Shor's quantum algorithm \cite{Shor97} can solve  the prime factorization and discrete logarithm problems in polynomial time, conventional public-key cryptosystems based on these problems are no longer secure. Although making a prophesy for 
when we can build a large  
quantum computer is hard, 
we should start preparing the next generation quantum-safe cryptosystem as soon as possible, because 
historical experiences show that deploying modern public key cryptography infrastructures takes a long time.

Reacting to this urgency,  the subject of post-quantum cryptography (PQC) has been systematically developed in the last decade \cite{Regev05,Peikert16decade}. PQC aims to design
{cryptosystems secure against quantum attacks,}  while being {able} to run on a classic computer. From 2016, the National Institute of Standards and Technology (NIST) has initiated a process to solicit, evaluate, and standardize one or more quantum-resistant public-key cryptographic algorithms. The process {revolves } around public
key encryption/key encapsulation mechanism (PKE/KEM)  and digital signature proposals.

Recently NIST has announced four post-quantum cryptography standardization candidates \cite{alagic2022status}: CRYSTALS-Kyber for PKE/KEM,  CRYSTALS-Dilithium, FALCON and SPHINCS+ for digital signatures. As the first three candidates are all based on lattices, it's a great victory of lattice-based cryptography (LBC), which enjoys the following prominent advantages. 
{First, LBC enjoys very strong security proofs based on the hardness of worst-case problems.} 
{Second, LBC implementations are notable for their efficiency compared to other post quantum constructions, primarily due to their inherent linear algebra based matrix or vector operations on integers.}
 Finally, 
LBC constructions offer extended functionality for advanced constructions such as identity-based encryption and  fully-homomorphic encryption (FHE). 

{In lattice-based PKE/KEM, the decrypted messages may not be $100\%$ correct.  As the encryption-decryption process  amounts to the transmission of messages through an additive noise channel, error correction techniques have been either implicitly or explicitly employed to reduce the decryption failure rate (DFR).}
  Moreover, since the adversary may extract the secrets by taking advantages of high DFRs, the DFR of a PKE/KEM scheme has to be extremely small (e.g., smaller than $2^{-128}$ or $2^{-140}$) \cite{SAC/FritzmannPS18,impact_dfr_DAnversVV18}. It is therefore worthwhile to improve the error correction mechanism in lattice-based PKE/KEM, with the hope of obtaining  better trade-off parameters: \begin{itemize}
	\item {Security Strength:} If  the error correction mechanism can increase the noise variance while maintaining a small DFR, then the PKE/KEM scheme has a higher security level.
	\item {Communication Bandwidth:} If the error correction mechanism can reduce the modulus while maintaining a small DFR, then the size of the ciphertext is reduced.  
	%	\item {Communication Efficiency:} If we can increase the number of transmitted bits while maintaining a small DFR, then the amount of transmitted information per instance is increased. 
\end{itemize} 

\subsection{Related Works}
KEMs can simultaneously output a session key together with a ciphertext that can be used to recover the session key. Two major approaches to designing lattice-based KEMs are PKEs (KEMs without reconciliation, see, e.g., \cite{naehrig2017frodokem,earlynewhopesimple,iacr/Poppelen16,laura-2110-e8}) and key exchanges (KEMs with reconciliation, see, e.g., \cite{iacr/Ding12a,DBLP:conf/uss/AlkimDPS16,jin2022compact}). As avoiding the error-reconciliation mechanism brings great simplicity, 
we focus on PKEs in this work.

Most lattice-based PKEs have implicitly employed an error correction mechanism which is referred to as  ``naive modulation''. It represents a 
mapping from a binary string to 
different positions in  
$\left\lbrace 0, 1, \ldots, q-1\right\rbrace$. If the noise amplitude is smaller than the error correction radius, then the decryption is correct. Thus a larger $q$ enables higher error correction capability.
Specifically, Regev's learning with errors (LWE) based PKE scheme \cite{Regev05} modulates $1$-bit information $\left\lbrace 0,1 \right\rbrace $ to  $\left\lbrace 0,q/2 \right\rbrace $. Kawachi et.~al. \cite{pkc/KawachiTX07} extends the PKE scheme to multi-bit modulation. 

In recent years, researchers have realized that (digital) error correction codes (ECC) can be concatenated with modulations to obtain better error correction performance. For instance, the LAC \cite{luxianhuiLAC} PKE employs BCH codes for error correction, which helps to reduce the modulo size $q$ from $12289$ to $251$. The reason behind the small $q$ is that, although the modulation level has minus error correction capability, the induced ECC helps to achieve a smaller DFR.  Other examples can be found in the repetition codes based NewHope-Simple \cite{earlynewhopesimple}, {XE5 based HILA5 \cite{Saarinen17},}
 and the Polar codes based NewHope-Simple \cite{entropy/WangL21a}.
The downside of an extra modern ECC is an increased complexity of the program code and a higher sensitivity to side-channel attacks \cite{DBLP:conf/ccs/DAnversTVV19} (information is obtained through physical channels such as power measurements, variable execution time of the decoding algorithm, etc). 
%Exploiting the variable execution time of the ECC decoding algorithm, Anvers et.~al. \cite{DBLP:conf/ccs/DAnversTVV19} showed how to use
%timing information to distinguish between ciphertexts that result in an error before decoding and ciphertexts that do not contain errors, which breaks the IND-CCA security of post-quantum secure schemes.

More importantly, using ECC and modulation in a concatenated manner confines the overall performance of the system, whose deficiencies include less flexible number of encoded bits, and the independent decoding nature of modulation and ECC. Fortunately,  the joint design of ECC and modulation (referred to as ``coded modulation'') has been studied in information theory and wireless communications for a few decades.  Ungerboeck's pioneering work \cite{tit/Ungerboeck82} in the 1980s showed that coded modulation exhibits significant performance gains. Then, Forney \cite{tit/Forney88p1,tit/Forney88a} systematically studied coded modulation schemes from coset codes/lattice codes. A breakthrough in information theory is that Erez and Zamir \cite{Erez2004} show high dimensional random lattice codes can achieve the capacity of additive white Gaussian noise (AWGN) channels. Recent years have also witnessed the use of Polar lattices \cite{tcom/LiuYLW19} and LDPC lattices \cite{tit/SilvaS19} in achieving the capacity of AWGN channels.
 In the language of coset codes,  lattice codes represent an \textit{elegant combination} of linear codes and modulation: if points in the constellation are closed, they are protected by ECC; if points are far away, the information bits are directly mapped to them.

It is noteworthy that deploying lattice codes in  lattice-based PKE is not straightforward, because previous lattice coding literature \cite{BK:Zamir} was considering lattice codes for the physical layer (the transmission power of the codes matters), while the modulo $q$ arithmetic in LBC represents a higher layer. In the past few years,
there have been some works that employ lattice codes in PKEs. In 2016 van Poppelen designed a Leech lattice based PKE \cite{iacr/Poppelen16}, while Saliba et.~al. \cite{laura-2110-e8} designed an $E_8$-lattice-based PKE in 2021. The use of $E_8$ and Leech parallels the celebrated  breakthrough in mathematics in recent years: proving the $E_8$ and Leech lattices offer the best sphere packing density in dimensions $8$ and $24$ \cite{viazovska2017sphere,cohn2017sphere}. 
Unfortunately, the labeling technique is missing in the Leech lattice based PKE \cite{iacr/Poppelen16}, while the labeling technique for $E_8$ in \cite{laura-2110-e8} is nonlinear. In this regard, 
 a general lattice-code based error correction formulation, along with efficient linear labeling, is needed for lattice-based PKEs.
 
\subsection{Contributions}
This paper contributes in the following ways, suggesting the naive modulation in lattice based PKE should be replaced with coded modulation.
\begin{itemize}
	\item  We consider the plain-LWE scheme Frodo  \cite{naehrig2017frodokem} and model it as a communication system, over which the communication channel is akin to the AWGN channel. We show that the error correction performance can be easily improved by replacing the naive modulation with lattice-based coded modulation. 
	 In a similar vein, the ring-based or module-based schemes such as NewHope-Simple \cite{earlynewhopesimple} and Kyber \cite{DBLP:conf/eurosp/BosDKLLSSSS18}, can also resort to lattice-based coded modulation. 
	\item We present a universal and efficient labeling technique for cubic-shaping based lattice codes. Due to the modulo $q$ arithmetic, lattice codes in LBC have to use hypercube shaping, which means the coarse lattice should be a simple integer lattice $q\mathbb{Z}^n$.  Although the number of lattice codewords can be easily identified in hypercube shaping, there seems to be no efficient labeling function available in the literature. In response, we propose
	a labeling function to establish a one-to-one map between the
	binary information bits and the set of lattice vectors.
	For a fine lattice whose Hermite parameter is large, we first rewrite its lattice basis to a rectangular form (the product of a unimodular matrix and a diagonal matrix). The proposed labeling is feasible for a wide range of lattices, such as $D_4$, $E_8$, $BW_{16}$, $\Lambda_{24}$, etc. 
	\item	A unified DFR formula over AWGN channels is derived to analyze the DFR of lattice-code based FrodoPKE. Only the Hermite parameter and the kissing number  of lattices are needed in the DFR formula. Previously the  DFRs were calculated by a computationally intensive case-by-case analysis. Via the DFR formula, better parameter sets for  FrodoPKE are derived, where the $E_8$ or $BW_{16}$ based implementations are particularly attractive: their encoding and decoding procedures are simple, and the modified PKE enjoys either higher security levels or smaller ciphertext sizes. 
%	  
%	  The recommended  $BW_{16}$-based implementation has the following advantages: it has larger Hermite parameter than $E_8$, while its dimension is {also compatible with the requirement that dimensions in FrodoPKE are a multiple of 64.}
\end{itemize}

%Overall, lattice coding has a lot to contribute to lattice cryptography. As the information transmission process corresponds to an AWGN channel,
% dedicated lattice coding techniques may help to achieve a better lattice based KEM with higher security, smaller bandwidth, and higher information transmission rate. 

The rest of this paper is organized as follows. Background  about
lattice codes and PKE are reviewed in Section II.
The proposed labeling is introduced  and analyzed in Section III.  
Section IV presents a coset-based lattice decoding formulation, along with the pseudocode of decoding $BW_{16}$. Section V presents the improved parameter sets for FrodoPKE. The last section concludes this paper.

\section{Preliminaries}
\subsection{Lattice Codes and Hypercube Shaping}
\begin{defn}[Lattices]{
		A {rank $n$}  lattice $\Lambda$ is a discrete additive subgroup of $\mathbb{R}^{m}$, $m\geq n$. For simplicity, it is assumed that $m=n$ throughout.}
\end{defn}
 
Based on $n$ linearly independent vectors $\mathbf{b}_{1},\ldots\thinspace,\mathbf{b}_{n}$, $\Lambda$ can be written as
\begin{equation}
\Lambda = \mathcal{L}(\mathbf{B}) =z_1\mathbf{b}_{1}+ z_2\mathbf{b}_{2}+ \cdots + z_n\mathbf{b}_{n},
\end{equation}
where $z_1,\ldots,z_n \in \mathbb{Z}$, and $\mathbf{B}=[\mathbf{b}_{1},\ldots\thinspace,\mathbf{b}_{n}]$ is referred to as a {basis} of $\Lambda$. 

\begin{defn}[Closest Vector Problem]
	Considering a query vector $\mathbf{t}$ and a lattice $\Lambda$, the closest vector problem is to find the closest vector to $\mathbf{t}$ in  $\Lambda$.
\end{defn}

The nearest neighbor quantizer $Q_\Lambda(\cdot)$ denotes a function that solves CVP, i.e., 
\begin{equation}\label{eq_quan2}
Q_\Lambda(\mathbf{t})=\mathop{\arg\min}_{\mathbf{v} \in \Lambda} \|\mathbf{t}-\mathbf{v}\|. 
\end{equation}
{In case of a tie, (\ref{eq_quan2}) outputs the candidate with the smallest Euclidean norm.}

%The quantizer is based on an algorithm which is referred to as \textit{decoding} which solves the closet vector problem (CVP) \cite{Micciancio2002} of lattices.

\begin{defn}[Fundamental region]
A fundamental region $\mathcal{R}_{\Lambda}$  of a lattice $\Lambda$ includes one and only one point of $\Lambda$, and when shifting it to any lattice point, the whole $\mathbb{R}^n$ space is tiled.
\end{defn}
 
The Voronoi region $\mathcal{V}_{\Lambda}$
is a special case of the fundamental region $\mathcal{R}_{\Lambda}$. It denotes the set of   points in $\mathbb{R}^n$  that are closer to the origin than any other lattice points in $\Lambda$, i.e.,
\begin{equation}{
	\mathcal{V}_{\Lambda}=\{\mathbf{y} \in \mathbb{R}^n \mid \| \mathbf{y} \|  \leq \|\mathbf{y}-\mathbf{w}\|,\,\forall\mathbf{w}\in \Lambda\}.}
	\end{equation}
\begin{defn}[Modulo lattice]
	$\left[\mathbf{x}\right] \mod \Lambda$ denotes the quantization error of $\mathbf{x}$ with respect to $\Lambda$:
	\begin{equation}
	\left[\mathbf{x}\right] \mod \Lambda = \mathbf{x} -Q_\Lambda(\mathbf{x}).
	\end{equation}
\end{defn}

\begin{defn}[Nested lattices]
	Two lattices $\Lambda_f$ and $\Lambda_c$ are nested if $\Lambda_c \subset \Lambda_f$. The denser lattice $\Lambda_f$ is called the \textit{fine/coding} lattice, and $\Lambda_c$ is called the \textit{coarse/shaping} lattice.
\end{defn}

%
%Lattice coding relies on using two nested lattices $\Lambda_f$ and $\Lambda_c \subset \Lambda_f$ in $\mathbb{R}^m$. The  generator matrices of    $\Lambda_f$ and $\Lambda_c$ can be represented by 
%$\mathbf{B}_f$ and $\mathbf{B}_c$.
%The fine lattice $\Lambda_f$ can be decomposed as the union of $|\Lambda_f/\Lambda_c|$ cosets of the coarse lattice $\Lambda_c$:
%\begin{equation}\label{eq_setunion1}
%\Lambda_f 
%=\bigcup\limits_{{\mathbf{d}}_j \in \Lambda_f  {\slash} \Lambda_c} ({\mathbf{d}}_j + \Lambda_c),
%\end{equation}
%where each coset $\Lambda_c  + {\mathbf{d}}_j$ is a translated coarse lattice and ${\mathbf{d}}_j$ is called the coset representative of $\Lambda_c  + {\mathbf{d}}_j$.  The non-zero minimum distance of $\Lambda_f$ offers quantifiable robustness to noise.

Lattice codes are the Euclidean space counterpart of linear codes, and they provide a unified framework to describe the coded modulation techniques \cite{tit/Forney88p1,tit/Forney88a}. The inherent structure is a one-level/multi-level binary encoder and  subset partitioning, which can encode more than $n$ information bits to $n$ symbols.

\begin{defn}[Lattice code]
A lattice code $\mathcal{C}(\Lambda_f,\Lambda_c)$  is the finite set of points in $\Lambda_f$ that lie within $\mathcal{R}_{\Lambda_c}$:
\begin{equation}\label{eq:self-encoding}
\mathcal{C}(\Lambda_f,\Lambda_c) = \Lambda_f \cap
\mathcal{R}_{\Lambda_c}.
\end{equation}
\end{defn}

%For any given $\mathbf{w}\in \Lambda_c$, its shaped constellation point in $\mathcal{C}$ is 
%\begin{equation} 
%\left[ \mathbf{w} \right] \mod \Lambda_s=\mathbf{w}-
%\mathcal{Q}_{\Lambda_s}(\mathbf{w}),
%\end{equation}
%where $\mathcal{Q}_{\Lambda_s}(\mathbf{w})$ denotes a function that exactly finds the nearest point of $\mathbf{w}$ in $\Lambda_s$. 

If $\Lambda_c=p \mathbb{Z}^n$, then (\ref{eq:self-encoding}) is called \textit{hypercube shaping}. A $2$-dimensional example is shown in Fig. \ref{fig:showShaping}. The purple points denote $\Lambda_f$, and those points enclosed with black circles denote $\Lambda_c$. The nesting relation is $\Lambda_c=7\mathbb{Z}^2 \subset \Lambda_f \subset \mathbb{Z}^2$. The fundamental region $\mathcal{R}_{\Lambda_c}$ in the example, enclosed by dashed black lines, is the shifted version of $\mathcal{V}_{\Lambda_c}$, i.e., $\mathcal{R}_{\Lambda_c}= \mathcal{V}_{\Lambda_c}+(3,3)$.

{
The information rate (averaged number of encoded bits) per dimension is defined as
\begin{equation}
B= \frac{1}{n} \log_2 \left(   \frac{\mathrm{Vol}(\Lambda_c)}{\mathrm{Vol}(\Lambda_f)}    \right) .
\end{equation}}
 
The Hermite parameter of a lattice, also identified as the coding gain,  is defined as
\begin{equation}
\gamma(\Lambda) = \lambda_1(\Lambda)^2/ \mathrm{Vol}(\Lambda)^{2/n}
\end{equation}
where $\lambda_1(\Lambda)$ denotes the length of {a} shortest non-zero vector in $\Lambda$,  and {$ \mathrm{Vol}(\Lambda)=\vert \det(\mathbf{B}) \vert$} 
denotes the volume of $\Lambda$. 
The   coding gain $\gamma(\Lambda)$ measures the increase in density of $\Lambda$ over the baseline integer
lattice $\mathbb{Z}$ (or $\mathbb{Z}^n$). Note that the   supremum of $\lambda_1(\Lambda)^2/ \mathrm{Vol}(\Lambda)^{2/n}$ over all $n$-dimensional lattices is known as Hermite's constant.
%In the case of using lattice codes in additive noise channels, it is often desirable to choose $\Lambda_f$ with a large coding gain, which achieves a smaller decoding error rate.

\begin{figure}[t!]
	\center
	\includegraphics[width=0.5\textwidth]{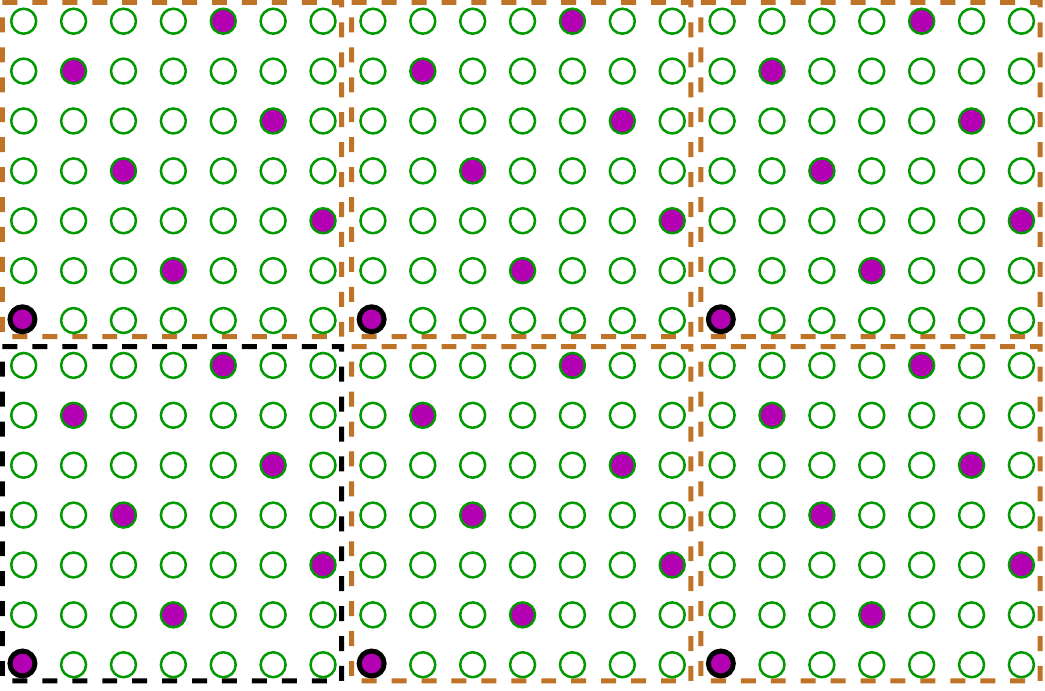}
	\caption{ Demonstration of the lattice code $	\mathcal{C}(\Lambda_f,7\mathbb{Z}^2)=\left\lbrace (0,0), (1,5), (2,3), (3,1), (4,6), (5,4), (6,2) \right\rbrace$} cut from hypercube shaping.
	\label{fig:showShaping}
\end{figure}

\subsection{PKE/KEM in LBC}
  
%FrodoKEM \cite{naehrig2017frodokem} is among the second round candidates of NIST standardization, and is one of two post-quantum algorithms recommended by the German Federal Office for Information Security (BSI) as cryptographically suitable for long-term confidentiality. 

FrodoKEM \cite{naehrig2017frodokem} is a simple and conservative KEM from generic lattices, and it is
one of two post-quantum algorithms recommended by the German Federal Office for Information Security (BSI) as cryptographically suitable for long-term confidentiality \cite{bsi2020cryptographic}.
The core of FrodoKEM is a public-key encryption scheme called FrodoPKE, whose IND-CPA security is
tightly related to the hardness of a corresponding learning with errors problem. 
{Due to the lack of algebraic structure, the security estimates of FrodoPKE rely on fewer assumptions than} 
other PKE/KEM schemes based on ring or module LWE. 

A public key encryption scheme $\mathsf{PKE}$ is a tuple of algorithms ($\mathsf{KeyGen}$, $\mathsf{Enc}$, $\mathsf{Dec}$) along with a message space $\mathcal{M}$.

In the key generation algorithm, by sampling $\mathbf{S},\mathbf{E} \sim {\chi_{\sigma}^{n'\times \bar{n}}} $, with $\chi_{\sigma}$ being a (truncated) discrete Gaussian distribution with width $\sigma$, and sampling $\mathbf{A}$ from a uniform distribution in $\mathbb{Z}_q^{n'\times n'}$, it computes 
\begin{equation}
\mathbf{B} = \mathbf{AS} + \mathbf{E} \in \mathbb{Z}_q^{n'\times \bar{n}}.
\end{equation}
The public key is $pk=(\mathbf{B},\mathbf{A})$, and the secret key is $sk=\mathbf{S}$.

In the part of public key encryption, it samples $\mathbf{S}', \mathbf{E}' \sim {\chi_{\sigma}^{\bar{m} \times n'}}$, $\mathbf{E}'' \sim {\chi_{\sigma}^{\bar{m}  \times \bar{n}}}$, and 
computes
\begin{align}
\mathbf{C}_1 &=  \mathbf{S}'\mathbf{A} +\mathbf{E}'\\
\mathbf{V}	&=\mathbf{S}'\mathbf{B}  + \mathbf{E}''.
\end{align} 

To encrypt a message $\mu \in \mathcal{M} = \left\lbrace 0,1 \right\rbrace ^{\bar{m}\bar{n}B}$, the ciphertext is generated by  
\begin{equation}
c=	(\mathbf{C}_1,\, \mathbf{C}_2=\mathbf{V}+ \mathsf{Frodo.EncodeM}(\mu)).
\end{equation}
The function $\mathsf{Frodo.EncodeM}$ represents a matrix encoding function of bit strings. In an element-wise manner, each $B$-bit value is transformed into the $B$ most significant bits of the corresponding entry modulo $q$. {We refer to 
	   $\mathsf{Frodo.EncodeM}$ as ``naive modulation'', as it 
	   amounts to a special case of the lattice code based encoding that employs hypercube shaping, with $\Lambda_f=q/(2^{B})\mathbb{Z}^{64}$, $\Lambda_c=q\mathbb{Z}^{64}$.} 

To decrypt, it employs the secret key $\mathbf{S}$ and the ciphertext   $\mathbf{C}_1,\mathbf{C}_2$ to compute
\begin{equation}
\hat{\mu} = \mathsf{Frodo.DecodeM}(\mathbf{C}_2- \mathbf{C}_1 \mathbf{S}),
\end{equation}
where $\mathsf{Frodo.DecodeM}$ standards for the demodulation function. The FrodoPKE protocol is summarized in Fig. \ref{fig:FRODOPKE}. 

When targeting security levels
 1, 3 and 5 in the NIST call for proposals (matching or exceeding the brute-force security of AES-128, AES-192, AES-256), the recommended parameters are
\begin{align*}
&\mathrm{Frodo}\text{-}640:\,  
n'=640, \bar{n}=8, \bar{m}=8, q=2^{15}, \sigma=2.75, \mathcal{M}= \left\lbrace 0,1 \right\rbrace^{128} \\
&\mathrm{Frodo}\text{-}976:\, 
n'=976, \bar{n}=8, \bar{m}=8, q=2^{16}, \sigma=2.3, \mathcal{M}= \left\lbrace 0,1 \right\rbrace^{192}\\
&\mathrm{Frodo}\text{-}1344:\, 
n'=1344, \bar{n}=8, \bar{m}=8, q=2^{16}, \sigma=1.4, \mathcal{M}= \left\lbrace 0,1 \right\rbrace^{256}.
\end{align*}

%These levels correspond to the brute-force security of AES-128 and AES-192, respectively. The FrodoPKE protocol is summarized in Fig. \ref{fig:FRODOPKE}. 
 
\begin{figure}[t!]
	\centering
	\fbox{
		\begin{tabular}{ccc}
			\multicolumn{3}{c}{Input Parameters: $q$, $n'$, $\bar{n}$, $\bar{m}$,  $\chi_{\sigma}$.}\tabularnewline
			\hline 
			\textbf{Alice} &  & \textbf{Bob}\tabularnewline
			$\mathbf{A}\leftarrow_{\$}\mathbb{Z}_{q}^{n'\times n'}$ &  & \tabularnewline
			$\mathbf{S},\mathbf{E}\leftarrow_{\$}{\chi_{\sigma}^{n'\times\bar{n}}}$ &  & $\mathbf{S}',\mathbf{E}'\leftarrow_{\$}{\chi_{\sigma}^{\bar{m}\times n'}}$\tabularnewline
			$\mathbf{B}=\mathbf{AS+E}$ & $\xrightarrow[]{\mathbf{A},\mathbf{B}}$ & $\mathbf{E}''\leftarrow_{\$}{\chi_{\sigma}^{\bar{m}\times\bar{n}}}$\tabularnewline
			&  & $\mathbf{C}_{1}=\mathbf{S'A+E'}$\tabularnewline
			&  & $\mathbf{V}=\mathbf{S'B+E''}$\tabularnewline
			&  & $\mu \in \left\lbrace 0,1\right\rbrace ^{\bar{m}\bar{n}B}$\tabularnewline
			$\mathbf{Y}=\mathbf{C}_{2}-\mathbf{C}_{1}\mathbf{S}$ & $\xleftarrow[]{\mathbf{C}_1,\mathbf{C}_2}$ & $\mathbf{C}_{2}=\mathbf{V}+\text{\ensuremath{\mathsf{Frodo.EncodeM}}(\ensuremath{\mu})}$\tabularnewline
			$\hat{\mu}=\mathsf{Frodo.DecodeM}(\mathbf{Y})$ &  & \tabularnewline
		\end{tabular}
	}
	\caption{The FrodoPKE protocol.}
	\label{fig:FRODOPKE}
\end{figure}

\section{The Proposed Scheme}
\begin{figure*}[t]
	\center
	
	\includegraphics[width=1 \textwidth]{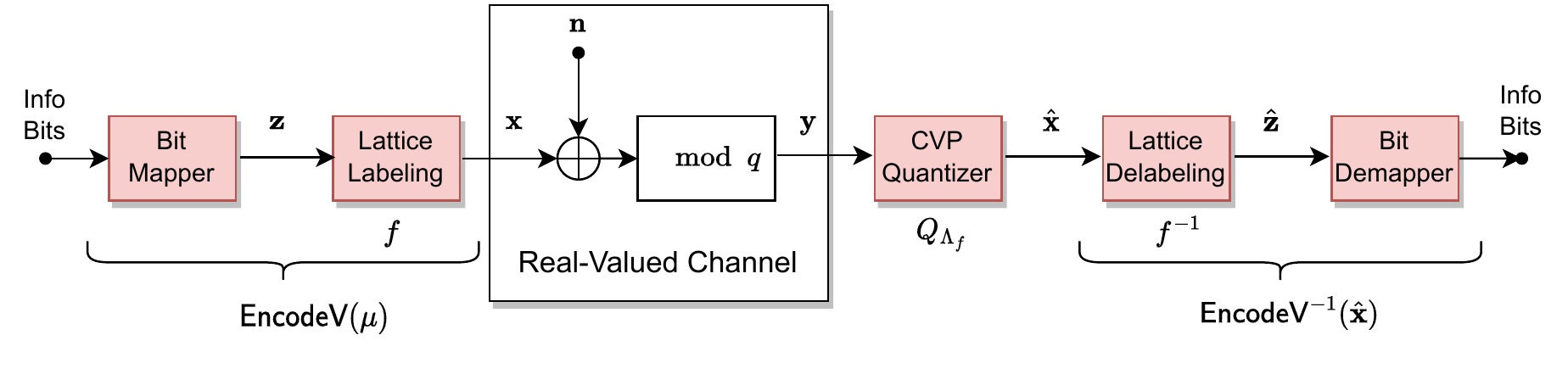}
	\caption{The equivalent communication system model.}
	\label{fig:LATTICEFLOW}
\end{figure*}

\subsection{Equivalent Communication Model} 
Recall that the decryption algorithm of FrodoPKE computes
\begin{align}
\mathbf{Y} &=\mathbf{C}_2 -\mathbf{C}_1\mathbf{S} \notag\\
&=\mathsf{Frodo.EncodeM}(\mu) +\mathbf{S}'\mathbf{E}+\mathbf{E}''-\mathbf{E}'\mathbf{S}, \label{matrixmodel}
\end{align}
whose addition is over the modulo $q$ domain. 
From the perspective of communications, this amounts to transmitting the modulated $\mu$ through an additive noise channel. 
Specifically, Eq. (\ref{matrixmodel}) can be formulated as
\begin{equation}
\mathbf{y}= \mathbf{x} + \mathbf{n} \mod q,
\end{equation}
where  $\mathbf{x}=\mathsf{EncodeV}(\mu)\in \mathbb{R}^{\bar{m}\bar{n}}$ denotes a general error correction function, and $\mathbf{y}$, $\mathbf{n}$ represent the vector form of $\mathbf{Y}$ and $\mathbf{S}'\mathbf{E}+\mathbf{E}''-\mathbf{E}'\mathbf{S}$, respectively. Since the element-wise modulo $q$ is equivalent to modulo a lattice $q\mathbb{Z}^{\bar{m}\bar{n}}$,  $\mathsf{EncodeV}$ can be designed from the perspective of lattice codes. 

{The flowchart of the communication model is plotted in Fig. \ref{fig:LATTICEFLOW}, which contains the following operations:}
\begin{itemize}
	\item \textit{Bit Mapper and Demapper}: The former maps binary information bits to an information vector $\mathbf{z}$ defined over integers. The later performs the inverse operation. These operations are straightforward.
	\item \textit{Lattice Labeling and Delabeling}: Given a message index $\mathbf{z}$, lattice labeling finds its corresponding lattice codeword $\mathbf{x}\in 	\mathcal{C}(\Lambda_f,\Lambda_c=q\mathbb{Z}^{\bar{m}\bar{n}})$. Delabeling denotes the inverse of labeling. 
	% Lattice labeling\&delabeling will be examined in this section.
	\item \textit{CVP Quantizer}: It returns the   closet lattice vector to $\mathbf{y}$ over $\Lambda_f$. The CVP algorithm of 
	$Q_{\Lambda_f}(\cdot)$ will be examined in Section \ref{secCVP}.
\end{itemize}

{While the conventional $\mathsf{Frodo.EncodeM}$ employs   $\Lambda_f=q/(2^{B})\mathbb{Z}^{\bar{m}\bar{n}}$ that leads to  simple labeling functions and CVP quantization, our work seeks to employ a better $\Lambda_f$ for   error correction performance. Thus the associated labeling function and CVP quantization are more involved.} 
 
%Lattice labeling\&delabeling will be examined in this section.
% It seeks to find an efficient labeling scheme, which transforms information between information integers and lattice codewords.

\subsection{Lattice Labeling and Delabeling} 
{This section will show that for any fine lattice with a basis in a rectangular form, a linear labeling from certain index sets to lattice codewords  can be generically defined.}

\begin{defn}[Rectangular Form]
	A lattice basis $\mathbf{B}$ is in a rectangular form if 
	\begin{equation}\label{eqrec4}
	\mathbf{B}= \mathbf{U} \cdot \mathrm{diag} (\pi_1,\pi_2,\ldots,\pi_n) ,
	\end{equation}
	where $\mathbf{U}\in \mathrm{GL}_n(\mathbb{Z})$ is a unimodular matrix, and $\pi_1,\pi_2,\ldots,\pi_n \in \mathbb{Q}^{+}$.
\end{defn}

For any lattice with a rational basis, it has a rectangular form. Specifically, consider the Smith Normal Form factorization of a lattice basis $\mathbf{B}^* \in \mathbb{Q}^{n\times n}$, then we have 
\begin{equation}
\mathbf{B}^*= \mathbf{U} \cdot \mathrm{diag} (\pi_1,\pi_2,\ldots,\pi_n) \cdot \mathbf{U}',
\end{equation}
where $\mathbf{U}, \mathbf{U}' \in \mathrm{GL}_n(\mathbb{Z})$\footnote{
	{The determinants of $\mathbf{U}$ and $\mathbf{U}'$ are $1$ or $-1$ by incorporating the necessarily rational factors into $\mathrm{diag} (\pi_1,\pi_2,\ldots,\pi_n)$.}}. As lattice bases are equivalent up to unimodular transforms, the term $\mathbf{U}'$ can be canceled out, and the rectangular form is derived.

%In the above definition, integer entries in the lattice basis are not necessarily required, but rather rational entries suffice.

For a lattice that features a rectangular form,  an efficient labeling scheme can be constructed. The idea is that the combination of rectangular form and non-uniform labeling  amounts to hypercube shaping.  Specifically, let the fine lattice be
\begin{equation}\label{eq_aclf}
\Lambda_f = \mathcal{L}(\mathbf{B}_f) = \mathcal{L}(\mathbf{U} \cdot \mathrm{diag} (\pi_1,\pi_2,\ldots,\pi_n)).
\end{equation}
Let $p\in \mathbb{Z}^{+}$ be a common multiplier of $\pi_1,\pi_2,\ldots,\pi_n$, and define
\begin{equation}
p_1 = p/\pi_1,p_2 = p/\pi_2,\ldots,p_n = p/\pi_n.
\end{equation}
If $\mathbf{B}_c = \mathbf{B}_f \mathrm{diag} (p_1,p_2,\ldots,p_n)$, we have 
\begin{align} 
\Lambda_c &= \mathcal{L}(\mathbf{U} \cdot \mathrm{diag} (\pi_1,\pi_2,\ldots,\pi_n) 
\cdot \mathrm{diag} (p_1,p_2,\ldots,p_n)) \notag \\
&=\mathcal{L}(p\mathbf{U}) \notag\\
&=p\mathbb{Z}^n. \label{eq_sha_3}
\end{align}
The last equality is due to the fact that a unimodular matrix can be regarded as a lattice basis of $\mathbb{Z}^n$. 
Hence modulo $\Lambda_c$ becomes equivalent to modulo $p$. Then we arrive at the following theorem.  

\begin{thm}[Labeling Function]
	Let the message space be
	\begin{equation}
	\mathcal{I} = \left\lbrace 0,1, \ldots, p_1-1 \right\rbrace \times \cdots \times \left\lbrace 0,1, \ldots, p_n-1 \right\rbrace, 
	\end{equation}
	and the pair of nested lattices be $\Lambda_f= \mathcal{L}(\mathbf{B}_f) = \mathcal{L}(\mathbf{U} \cdot \mathrm{diag} (p/p_1,p/p_2,\ldots,p/p_n))$, $\Lambda_c=\mathcal{L}(p\mathbf{U})=p\mathbb{Z}^n$.
	With $\mathbf{z} \in \mathcal{I}$,   the function $f:$ $\mathcal{I} \rightarrow 	\mathcal{C}(\Lambda_f,\Lambda_c)$,
	\begin{equation}\label{eq_f_trans}
	f( \mathbf{z}) =  \left[ \mathbf{B}_f \mathbf{z} \right] \mod p
	\end{equation}
	is bijective.
\end{thm}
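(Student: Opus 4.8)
The plan is to establish bijectivity through a cardinality-plus-injectivity argument, which is the most economical route since both the domain $\mathcal{I}$ and the codomain $\mathcal{C}(\Lambda_f,\Lambda_c)$ are finite. First I would check that $f$ is well defined, i.e. that $f(\mathbf{z})\in\mathcal{C}(\Lambda_f,\Lambda_c)$ for every $\mathbf{z}\in\mathcal{I}$. Since $\mathbf{z}$ is an integer vector, $\mathbf{B}_f\mathbf{z}\in\Lambda_f$. The reduction $[\cdot]\bmod p$ subtracts the point $Q_{\Lambda_c}(\mathbf{B}_f\mathbf{z})\in\Lambda_c=p\mathbb{Z}^n$, and because $\Lambda_c\subset\Lambda_f$ by the nesting assumption, the result still lies in $\Lambda_f$. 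By the definition of the modulo operation, $f(\mathbf{z})$ falls in the fundamental region of $\Lambda_c$, so indeed $f(\mathbf{z})\in\Lambda_f\cap\mathcal{R}_{\Lambda_c}$.

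Next I would count both sets. Using the volume ratio for hypercube-shaped codes together with $\mathrm{Vol}(\Lambda_f)=\vert\det\mathbf{B}_f\vert=\prod_i(p/p_i)$ and $\mathrm{Vol}(\Lambda_c)=\vert\det(p\mathbf{U})\vert=p^n$, both relying on $\vert\det\mathbf{U}\vert=1$, the number of codewords equals $\mathrm{Vol}(\Lambda_c)/\mathrm{Vol}(\Lambda_f)=\prod_i p_i=\vert\mathcal{I}\vert$. Hence it suffices to prove injectivity, as an injection between finite sets of equal cardinality is automatically a bijection.

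For injectivity, suppose $f(\mathbf{z})=f(\mathbf{z}')$ with $\mathbf{z},\mathbf{z}'\in\mathcal{I}$. Equality after reduction mod $\Lambda_c$ forces $\mathbf{B}_f(\mathbf{z}-\mathbf{z}')\in p\mathbb{Z}^n$. Writing $\mathbf{B}_f=\mathbf{U}\,\mathrm{diag}(p/p_1,\ldots,p/p_n)$ and using that $\mathbf{U}^{-1}\mathbb{Z}^n=\mathbb{Z}^n$ for a unimodular $\mathbf{U}$, this membership is equivalent to $\mathrm{diag}(p/p_1,\ldots,p/p_n)(\mathbf{z}-\mathbf{z}')\in p\mathbb{Z}^n$, which decouples into the $n$ independent congruences $(p/p_i)(z_i-z_i')\equiv 0 \pmod{p}$, i.e. $z_i\equiv z_i'\pmod{p_i}$. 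Since $z_i,z_i'\in\{0,\ldots,p_i-1\}$ satisfy $\vert z_i-z_i'\vert<p_i$, each congruence forces $z_i=z_i'$, hence $\mathbf{z}=\mathbf{z}'$. Combined with the matching finite cardinalities, this yields bijectivity.

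The step I expect to be most delicate is the decoupling in the injectivity argument, since the whole construction hinges on the unimodularity of $\mathbf{U}$: it is what lets $p\mathbb{Z}^n$ be rewritten with basis $p\mathbf{U}$ and what guarantees $\mathbf{U}^{-1}$ is integral, so that multiplying through by $\mathbf{U}^{-1}$ cleanly separates the lattice-membership condition into coordinate-wise congruences. I would state carefully that $\mathbf{B}_f(\mathbf{z}-\mathbf{z}')\in p\mathbb{Z}^n$ is \emph{equivalent} to $\mathrm{diag}(p/p_1,\ldots,p/p_n)(\mathbf{z}-\mathbf{z}')\in p\mathbb{Z}^n$, rather than merely implied by it, to avoid losing the coordinate-wise reduction; the rest is routine.
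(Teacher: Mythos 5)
Your proposal is correct and follows essentially the same route as the paper: both establish injectivity by showing that $\mathbf{B}_f(\mathbf{z}-\mathbf{z}')\in\Lambda_c$ forces coordinate-wise divisibility by $p_i$ (impossible for distinct elements of $\mathcal{I}$), and both then conclude via the cardinality count $\vert\Lambda_f/\Lambda_c\vert = p_1\cdots p_n = \vert\mathcal{I}\vert$. Your additional check of well-definedness (that $f(\mathbf{z})$ actually lands in $\Lambda_f\cap\mathcal{R}_{\Lambda_c}$, using $\Lambda_c\subset\Lambda_f$) is a small but worthwhile refinement the paper leaves implicit.
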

\begin{proof}
	It suffices to prove that $f$ is both injective and
	surjective. ``Injective'' means no two elements in the domain of the function gets mapped to the same image, i.e., for $\mathbf{z}_1,\mathbf{z}_2 \in \mathcal{I}$,
	\begin{equation}
	\mathbf{z}_1 \neq \mathbf{z}_2 \rightarrow f(\mathbf{z}_1) \neq f(\mathbf{z}_2).
	\end{equation}
	{We prove this by contradiction, showing $\mathbf{z}_1 \neq \mathbf{z}_2 \rightarrow f(\mathbf{z}_1) = f(\mathbf{z}_2)$ does not hold.}
	 If $f(\mathbf{z}_1)=f(\mathbf{z}_2)$, it implies that we can find $\mathbf{z}_1,\mathbf{z}_2 \in \mathcal{I}$, $\mathbf{z}_3 \in \mathbb{Z}^n$ such that
	\begin{align}
	&\mathbf{B}_f(\mathbf{z}_1-\mathbf{z}_2)=\mathbf{B}_f \cdot \mathrm{diag} (p_1,p_2,\ldots,p_n)\cdot\mathbf{z}_3 \notag \\
	&\rightarrow \mathbf{z}_1-\mathbf{z}_2=  \mathrm{diag} (p_1,p_2,\ldots,p_n)\cdot\mathbf{z}_3. \label{eqprove20}
	\end{align}
	Then (\ref{eqprove20}) has a solution only when $\mathbf{z}_3=\mathbf{0}$, which leads to $\mathbf{z}_1=\mathbf{z}_2$.
	
	``Surjective'' means that any element in the range of the
	function is hit by the function. Recall that the number of {coset representatives} of $\Lambda_f/\Lambda_c$ is
	\begin{equation}
\vert \mathrm{det}(\mathbf{B}_c)\vert/
\mid\mathrm{det}(\mathbf{B}_f)\mid=p_1p_2\cdots p_n.
	\end{equation}
	As $ \vert{\mathcal{I}} \vert =p_1p_2\cdots p_n$, it follows from the injective property that 
	all the {coset representatives} have been  hit distinctively. So the surjection is proved.
\end{proof}

{Denote $\mathbf{x}=f(\mathbf{z})$. The inverse of $f$ is given by 
	\begin{equation}\label{eq21index}
	{\mathbf{z}}=f^{-1}({\mathbf{x}}) \triangleq \mathbf{B}_f^{-1}{\mathbf{x}} \mod (p_1, \ldots, p_n),
	\end{equation}
	which stands for ${z}_i = \left( \mathbf{B}_f^{-1}{\mathbf{x}}\right)_i\,\mod \, p_i, \, i=1, \ldots, n.$ 
	As the labeling and delabeling process also encounters an additive noise channel, we examine the correct recovery condition hereby. Assume that the receiver's side has the noisy observation $\mathbf{x}+\mathbf{n}$, with $\mathbf{x}\in \Lambda_f$ and $\mathbf{n}$ being the additive noise. }
	
	{
	\begin{thm}[Correct Decoding]\label{thm_er}
		If $Q_{\Lambda_{f}}(\mathbf{n})\in \Lambda_{c}$, then $f^{-1}(Q_{\Lambda_{f}}(\mathbf{x}+\mathbf{n}))=f^{-1}(\mathbf{x})$.
	\end{thm}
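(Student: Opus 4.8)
The plan is to reduce everything to the shift-invariance of the lattice quantizer followed by a one-line modular-arithmetic computation. First I would recall the standard property of the nearest-neighbor quantizer: for any $\mathbf{v}\in\Lambda_f$ and any query $\mathbf{t}$, one has $Q_{\Lambda_f}(\mathbf{t}+\mathbf{v})=\mathbf{v}+Q_{\Lambda_f}(\mathbf{t})$, since translating the query by a lattice vector translates every candidate—and hence the closest one—by the same vector. Because $\mathbf{x}\in\mathcal{C}(\Lambda_f,\Lambda_c)\subset\Lambda_f$, applying this with $\mathbf{v}=\mathbf{x}$ gives $Q_{\Lambda_f}(\mathbf{x}+\mathbf{n})=\mathbf{x}+Q_{\Lambda_f}(\mathbf{n})$. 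This single identity is the crux of the argument: it cleanly separates the transmitted codeword from the quantized noise.

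Next I would feed this into the delabeling map $f^{-1}(\cdot)=\mathbf{B}_f^{-1}(\cdot)\bmod(p_1,\ldots,p_n)$ and exploit the linearity of $\mathbf{B}_f^{-1}$ before the coordinate-wise reduction. Writing $f^{-1}(Q_{\Lambda_f}(\mathbf{x}+\mathbf{n}))=\bigl(\mathbf{B}_f^{-1}\mathbf{x}+\mathbf{B}_f^{-1}Q_{\Lambda_f}(\mathbf{n})\bigr)\bmod(p_1,\ldots,p_n)$, it remains to show that the noise term is annihilated by the reduction. Here I would invoke the hypothesis $Q_{\Lambda_f}(\mathbf{n})\in\Lambda_c$: since $\Lambda_c=\mathcal{L}(\mathbf{B}_c)$ with $\mathbf{B}_c=\mathbf{B}_f\,\mathrm{diag}(p_1,\ldots,p_n)$, there is an integer vector $\mathbf{k}$ with $Q_{\Lambda_f}(\mathbf{n})=\mathbf{B}_c\mathbf{k}$, whence $\mathbf{B}_f^{-1}Q_{\Lambda_f}(\mathbf{n})=\mathrm{diag}(p_1,\ldots,p_n)\mathbf{k}$ has $i$-th entry $p_ik_i\equiv0\pmod{p_i}$. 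Thus every coordinate of the noise term vanishes modulo $(p_1,\ldots,p_n)$, leaving $f^{-1}(Q_{\Lambda_f}(\mathbf{x}+\mathbf{n}))=\mathbf{B}_f^{-1}\mathbf{x}\bmod(p_1,\ldots,p_n)=f^{-1}(\mathbf{x})$, as claimed. Equivalently, the rectangular form guarantees that $f^{-1}$ is invariant under translations by $\Lambda_c$, which is exactly the structural fact being used.

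The main obstacle I anticipate is not the algebra but stating the quantizer shift-invariance cleanly in the presence of ties. The tie-breaking rule declared earlier (smallest Euclidean norm) is not itself translation-invariant, so $Q_{\Lambda_f}(\mathbf{t}+\mathbf{v})=\mathbf{v}+Q_{\Lambda_f}(\mathbf{t})$ can fail precisely on the measure-zero boundary where $\mathbf{t}$ is equidistant from two or more lattice points. I would address this by noting that for the continuous, AWGN-like noise driving the DFR analysis this boundary event has probability zero, so the identity—and hence the theorem—holds almost surely, which is all the subsequent failure-rate computation requires; alternatively, fixing a translation-consistent fundamental domain for $\Lambda_f$ in the sense of the modulo-lattice conventions of \cite{BK:Zamir} makes the identity exact. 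I would also record the trivial but necessary check that $\mathbf{x}\in\Lambda_f$, which holds because $\mathbf{x}$ is a lattice codeword.
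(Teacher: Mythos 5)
Your proof is correct and follows essentially the same route as the paper's: the shift-invariance $Q_{\Lambda_f}(\mathbf{x}+\mathbf{n})=\mathbf{x}+Q_{\Lambda_f}(\mathbf{n})$, followed by linearity of $\mathbf{B}_f^{-1}$ and the observation that $Q_{\Lambda_f}(\mathbf{n})=\mathbf{B}_f\,\mathrm{diag}(p_1,\ldots,p_n)\mathbf{k}$ vanishes under the coordinate-wise reduction. Your added caveat about tie-breaking spoiling exact shift-invariance on a measure-zero set is a legitimate refinement the paper silently omits, but it does not change the substance of the argument.
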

	\begin{proof}
		Notice that 
		\begin{align}
		Q_{\Lambda_{f}}(\mathbf{x}+\mathbf{n})=\mathbf{x}+	Q_{\Lambda_{f}}(\mathbf{n}),
		\end{align}
		then we have
		\begin{equation}
		f^{-1}(Q_{\Lambda_{f}}(\mathbf{x}+\mathbf{n}))=
		\mathbf{B}_f^{-1}{\mathbf{x}} + \mathbf{B}_f^{-1}	Q_{\Lambda_{f}}(\mathbf{n}) \mod (p_1, \ldots, p_n)
		\end{equation}
		The condition of $Q_{\Lambda_{f}}(\mathbf{n})\in \Lambda_{c}$ implies that this vector of the coarse lattice can be written as $\mathbf{B}_f \mathrm{diag} (p_1,p_2,\ldots,p_n) \mathbf{k}$ for a $\mathbf{k}\in \mathbb{Z}^n$. This yields $Q_{\Lambda_{f}}(\mathbf{n}) \mod (p_1, \ldots, p_n) = \mathbf{0}$ and the theorem is proved.
	\end{proof}
	We summarize three cases for the correct recovery of messages. i) Noiseless: $\mathbf{n}=\mathbf{0}$. ii) Noise is small: $Q_{\Lambda_{f}}(\mathbf{n})=\mathbf{0}$. iii) Noise is large but still in the coarse lattice: $Q_{\Lambda_{f}}(\mathbf{n})\in \Lambda_{c}$.
}

An example of using   labeling and delabeling is given below.

\noindent \textbf{Example:} Consider the  $D_4$ lattice, whose lattice basis and its inverse are respectively given by
\begin{align}
&\mathbf{B}_{D_4} = \left[\begin{matrix} 1&0&0&0\\
0&1&0&0\\
0&0&1&0\\
1&1&1&1\end{matrix}\right]
\cdot \mathrm{diag} (1,1,1,2), \\
&\mathbf{B}_{D_4}^{-1}= \left[\begin{matrix} 1&0&0&0\\
0&1&0&0\\
0&0&1&0\\
-0.5&-0.5&-0.5&0.5\end{matrix}\right].
\end{align}
To encode $7$ bits over $4$ dimensions, 
let the pair of nested lattices be $(\Lambda_f,\Lambda_c)=(D_4, 4\mathbb{Z}^4)$, and the
message space be
\begin{equation}
\mathcal{I}=\left\lbrace 0,1,2,3 \right\rbrace^3 \times \left\lbrace 0,1 \right\rbrace.
\end{equation} 
%Concrete steps over Fig. \ref{fig:LATTICEFLOW} are explained as follows. 
W.l.o.g, let the input binary string be $\left\lbrace 0,1,1,0,1,1,1 \right\rbrace$. Then the ``Bit Mapper'' transforms the bits to a vector in $\mathcal{I}$:
\begin{equation*}
\mathbf{z}=[1,2,3,1]^\top.
\end{equation*}
By using lattice labeling in Eq. (\ref{eq_f_trans}), we have
\begin{equation*}
\mathbf{x}= f(\mathbf{z})= [1,2,3,0]^\top.
\end{equation*}
In the noiseless case of $\mathbf{n}=0$, we 
have
\begin{align}
	f^{-1}(\mathbf{x}) &=f^{-1}([1,2,3,0]^\top)\\
	&= [1,2,3,-3]^\top \mod (4,4,4,2)\\
  &= [1,2,3,1]^\top.
\end{align}
In the noisy case of $\mathbf{n}=[4,4,4,4]^\top$, we 
have $Q_{\Lambda_{f}}(\mathbf{n})\in \Lambda_{c}$, and 
\begin{align}
f^{-1}(Q_{\Lambda_{f}}(\mathbf{x}+\mathbf{n})) &=f^{-1}([5,6,7,4]^\top)\\
&= [9,10,11,-7]^\top \mod (4,4,4,2)\\
&= [1,2,3,1]^\top.
\end{align}
Finally, the  ``Bit Demapper''   transforms the information integers to bits, which equal to the original input. 
% For example, 
%\begin{align}
%\mathbf{B}_{D4} &= \left[\begin{matrix} 1&1&1&1\\0&1&0&0\\0&0&1&0\\0&0&0&1\end{matrix}\right]\cdot \mathrm{diag} (2,1,1,1) \\
%\mathbf{B}_{E8} &= 
%\left[\begin{matrix} 
%1&-1&0&0&0&0&0& 1\\
%0&1&-1&0&0&0&0&1\\
%0&0&1&-1&0&0&0&1\\
%0&0&0&1&-1&0&0&1\\
%0&0&0&0&1&-1&0&1\\
%0&0&0&0&0&1&-1&1\\
%0&0&0&0&0&0&1&1\\
%0&0&0&0&0&0&0&1\\
%\end{matrix}\right] \notag\\
%&\cdot \mathrm{diag} (2,1,1,1,1,1,1,1/2).
%\end{align}

\subsection{Rectangular Forms of Code-Based Lattices}
%Yanfei thesis

The proposed labeling is feasible for a wide range of lattices, such as low-dimensional optimal lattices $D_2$, $D_4$, $E_8$, $\Lambda_{24}$, and the general Construction-A and Construction-D lattices. 
Construction A and Construction D are popular techniques of lifting linear codes to lattices, based on which many remarkable lattices with large coding gains have been constructed, such as the Barnes--Wall lattices \cite{tit/Forney88a,tit/SalomonA05,dcc/SalomonA07} and the polar lattices \cite{tcom/LiuYLW19,LiuSL21}.  {Let $C$ be a linear binary code of length $n$, dimension $k$ and minimum distance $d$, denoted as $(n,k,d)$. } 	

%Good lattices can be constructed from good linear codes (e.g., LDPC codes, and polar codes). Construction D is the technique of rising a set of linear codes $C_0 \subset C_1 \subset \cdots \subset C_a = \mathbb{F}_2^n$ to a lattice. Construction D lattices are appealing because they can have good coding gain and are constructed using familiar binary codes.

\begin{defn}[Construction A \cite{Conway1999}]
A vector $\mathbf{y}$ is a lattice vector of the Construction-A lattice over $C$ if and only if $\mathbf{y}$ modulo $2$ is congruent to a codeword of $C$.
\end{defn}

{Let $\phi(\cdot)$ be a natural mapping function from $\mathbb{F}_2$ to $\mathbb{R}$ with  $\phi(0)=0, \phi(1)=1$ for a scalar input, and $\phi(\cdot)$ is applied element-wise for a vector/matrix input.}
 Let $\mathbf{G}$ be the generator matrix of $C$. By reformulating it as the Hermite normal form of $\left\lbrace \mathbf{I},\mathbf{A} \right\rbrace$, the Construction-A lattice of $C$ can be written as
\begin{equation}
\Lambda_{A} =  \mathcal{L}\left( \left[\begin{matrix} \phi(\mathbf{I})& \mathbf{0}\\ \phi(\mathbf{A}) & 2\mathbf{I}\end{matrix}\right]\right).
\end{equation}
The lattice basis of $\Lambda_{A}$ is therefore of a rectangular form. The volume of $\Lambda_{A}$ is
\begin{equation}\label{eqvolCA}
V(\Lambda_{A}) = 2^{n-k}.
\end{equation}

%Now we consider Construction D.

\begin{defn}[Construction D \cite{Conway1999}]
	Let $C_0 \subset C_1 \subset \cdots \subset C_a = \mathbb{F}_2^n$ be a family of nested binary linear codes, where $C_i$ has parameters $(n,k_i,d_i)$ and $C_a$ is the trivial $(n,n,1)$ code. A vector $\mathbf{y}$ is a lattice vector of the Construction-D lattice over $(C_0, \ldots, C_a)$  if and only if $\mathbf{y}$ is congruent (modulo $2^{a}$) to a vector in $C_0+2C_1+\cdots+2^{a-1}C_{a-1}$.
\end{defn}

Denote the generator matrices of $C_0$, $C_i$, and $C_a$ as
\begin{align}
&	\mathbf{G}_{0}=\left[\begin{array}{cccc}
\mid & \mid &  & \mid\\
\mathbf{g}_{1} & \mathbf{g}_{2} & \cdots & \mathbf{g}_{k_{0}}\\
\mid & \mid &  & \mid
\end{array}\right]\\
&\mathbf{G}_{i}=\left[\begin{array}{cccccc}
\mid & \mid &  & \mid &  & \mid\\
\mathbf{g}_{1} & \mathbf{g}_{2} & \cdots & \mathbf{g}_{k_{0}} & \cdots & \mathbf{g}_{k_{i}}\\
\mid & \mid &  & \mid &  & \mid
\end{array}\right]\\
&\mathbf{G}_{a}=\left[\begin{array}{cccccccc}
\mid & \mid &  & \mid &  & \mid &  & \mid\\
\mathbf{g}_{1} & \mathbf{g}_{2} & \cdots & \mathbf{g}_{k_{0}} & \cdots & \mathbf{g}_{k_{i}} & \cdots & \mathbf{g}_{k_{a}}\\
\mid & \mid &  & \mid &  & \mid &  & \mid
\end{array}\right],\label{eq_gadef}
\end{align}
where $1 \leq k_0 \leq k_1 \leq \cdots \leq k_a =n$.
Then the code formula of a Construction-D lattice is 
\begin{align}
\Lambda_{D} &= \bigcup_{\mathbf{u}_i\in {\left\lbrace 0,1 \right\rbrace }^{k_i}}\left( \sum_{i=0}^{a-1}2^i \phi(\mathbf{G}_i) \mathbf{u}_i \mathbf{}\right) + 2^a  \mathbb{Z}^n \\
&= \mathcal{L}(\phi(\mathbf{G}_a)\cdot\mathrm{diag} (2^0 \mathbf{1}_{k_0}, \ldots, 2^a \mathbf{1}_{k_a-k_{a-1}})), \label{eq_consd28}
\end{align}
where $\mathbf{1}_{k_i}$ denotes an all-one vector of dimension $k_i$, $\phi(\mathbf{G}_{a})$ is a unimodular matrix. Thus $2^a  \mathbb{Z}^n \subset \Lambda_{D}$ and the  volume of a   Construction-D lattice is 
\begin{equation}\label{eqvolCD}
V(\Lambda_{D}) = 2^{an -\sum_{i=0}^{a-1}k_i}.
\end{equation}
%Reed-Muller codes are a class of linear block codes over $\mathbb{F}_2^n$. 
By using Construction D over Reed-Muller codes, the Barnes--Wall lattices can be obtained \cite{tit/Forney88a} \footnote{{Barnes--Wall lattices can also be defined recursively \cite[Definition 1.1]{cc/GrigorescuP17}.}}. Some low-dimensional examples are
\begin{align}
&BW_{8} = (8,4,4) + 2\mathbb{Z}^8 \approxeq E_8\\
&BW_{16}  =     (16,5,8) +2(16,15,2) + 4\mathbb{Z}^{16} \approxeq \Lambda_{16} \label{eq_29bw16} \\
&BW_{32}=  (32,6,16) +2(36,26,4) + 4\mathbb{Z}^{32} \\
&BW_{64}= (64,7,32) +2(64,42,8) +4(64,63,2)+ 8\mathbb{Z}^{64},
\end{align}
where $\approxeq$ denotes equality up to rotations. The rectangular-form lattice basis in (\ref{eq_consd28}) can be derived by considering the Kronecker product based construction of Reed-Muller codes \cite[Section I-D]{tit/Arikan09}. With explicit rectangular forms, the lattice bases of $E_8$, $BW_{8}$ and $BW_{16}$ are shown in Appendix A.

%
%The rectangular forms of Barnes--Wall lattices can be found by using Plotkin constructions:
%\begin{align}
%\mathbf{G}_a &= \left[\begin{matrix} 1&1\\0&1\end{matrix}\right]^{\otimes r}\cdot \mathbf{P}, \\
%\phi(\mathbf{G}_a) &= \phi \left(\left[\begin{matrix} 1&1\\0&1\end{matrix}\right]^{\otimes r} \right) \cdot \mathbf{P} \in \mathrm{GL}_n(\mathbb{Z}),
%\end{align}
%where $\mathbf{P}$ is a column permutation matrix to meet the nested relation of $\mathbf{G}_0$, $\ldots$,  $\mathbf{G}_a$.  

%\begin{equation}
%|\mathbf{R}_n \Lambda_{D}/p \mathbb{Z}^{n}| = \frac{p^n}{2^{n/2} \cdot 2^{an -\sum_{i=0}^{a-1}k_i}}.
%\end{equation}

%\subsubsection{Construction-D Lattices}
% The fundamental volume of a Construction-D lattice is given by
%\begin{equation}\label{eqvolCD}
%V(\Lambda_{D}) = 2^{an -\sum_{i=0}^{a-1}k_i}.
%\end{equation}
%By substituting Eq. (\ref{eqvolCD}) to (\ref{rate1}) and (\ref{rate2}), its achievable information rate is derived.

%With $\Lambda_f=\Lambda_{D}  \subset \Lambda_{c} = p \mathbb{Z}^{n}$, the achievable rate is given by
%\begin{equation}
%|\Lambda_{D}/p \mathbb{Z}^{n}| = \frac{p^n}{2^{an -\sum_{i=0}^{a-1}k_i}}.
%\end{equation}

%both has $140$ bits
%\section{Instantiation of $2\mathbf{R}_{64}^{-1}BW_{64}$}

%called CVP decoding?

\subsection{CVP Decoding}\label{secCVP}
%In lattice-based PKE, the downside of an extra ECC decoding  is an increased complexity of the program code and a higher sensitivity to side-channel attacks (information is obtained through physical channels such as power measurements, electromagnetic radiation or timing). Exploiting the variable execution time of the ECC decoding algorithm, Anvers et.~al. \cite{DBLP:conf/ccs/DAnversTVV19} showed how to use
%timing information to distinguish between ciphertexts that result in an error before decoding and ciphertexts that do not contain errors, which breaks the IND-CCA security of post-quantum secure schemes. 
Enumeration and sieving are two popular types of CVP algorithms for decoding random lattices \cite{codcry/HanrotPS11,basesearch/Voulgaris11}. For the code-based lattices used in error correction, they feature strong structures, thus algorithms should exploit the structures to improve the decoding efficiency.  {While there exist bounded distance  decoding (BDD) for the considered Barnes--Wall lattices \cite{isit/MicciancioN08,cc/GrigorescuP17}, BDD fails to reach the DFR of CVP decoding.} Exploiting the structure of cosets,  
efficient CVP algorithms of $E_8$ and $D_n$ can be found in \cite{ConwayS82a}. In a similar vein, this section examines the CVP decoding of $BW_{16}$, $BW_{32}$ and $BW_{64}$.

%$BW_{16}$ because: i) It
%has a relatively large coding gain. ii) Dimension $16$ is compatible to most PKE protocols.
 
%To resist timing attacks against   lattice-based PKE/KEM,  the decoding algorithms of lattice codes should not only feature low computational complexity, but also a constant-time property, namely the implementation time is independent of the positions of the query vector in CVP. 

\subsubsection{Lattice Partition as Cosets}
A natural and efficient way to design CVP algorithms for Construction-D lattices is to partition the lattice as the union of cosets. If $\Lambda$ equals to the union of $\Lambda'$ cosets, the CVP of $\Lambda$ can resort to that of $\Lambda'$:
{\begin{align}\label{union_decode1}
	Q_{\Lambda }(\mathbf{t}) &=  Q_{\Lambda'+\mathbf{g}'}(\mathbf{t}),\\
	\mathbf{g}' &=\mathrm{argmin}_{\mathbf{g}\in \Lambda/\Lambda'} \left\| \mathbf{t} -	Q_{\Lambda'+\mathbf{g}}(\mathbf{t}) \right\|,\nonumber
	\end{align}}
where $Q_{\Lambda'+\mathbf{g}}(\mathbf{t}) = \mathbf{g} + 	Q_{\Lambda'}( \mathbf{t}-\mathbf{g})$. Denote the number of cosets as $\vert\Lambda/\Lambda'\vert$. Then the computational complexity of $Q_{\Lambda}$ is $\vert\Lambda/\Lambda'\vert$ times larger than $Q_{\Lambda'}$.

All the Construction-D lattices admit a $\mathbb{Z}^n$ based coset partition, but such partition has a huge number of cosets in general. Whenever possible, partitioning the lattice as $D_n$ based cosets helps to decode faster. For example, the magic behind the CVP algorithm of $E_8$ \cite{ConwayS82a} is to treat $E_8$ as two $D_8$ cosets while $D_8$ amounts to two $\mathbb{Z}^8$ cosets.

%Note that a CVP for $\Lambda'$ can easily be extended to a coset $\mathbf{g}+\Lambda'$. If  $Q_{\Lambda'}(\mathbf{t})$ is the closest point of $\Lambda'$ to $\mathbf{t}$, then
%\begin{equation}
%Q_{\Lambda'+\mathbf{g}}(\mathbf{t}) = \mathbf{g} + 	Q_{\Lambda'}( \mathbf{t}-\mathbf{g}).
%\end{equation}
%
%If $\Lambda' \subset \Lambda$,   decoding $\mathbf{t}$ over $\Lambda$ amounts to computing the closest vector in each coset using a decoding for $\Lambda'$, and applying the union identity, i.e., 
%
%Thus for a lattice $\Lambda$ with the sublattice $\Lambda'$ of index $\vert\Lambda/\Lambda'\vert$, the computational complexity $\mathrm{Comp}(\Lambda)$ of calculating $Q_{\mathbf{t}}(\Lambda)$ is bounded by $\vert\Lambda/\Lambda'\vert\mathrm{Comp}(\Lambda')$.

 \subsubsection{Decoding $BW_{16}$}
 Among $BW_{16}$, $BW_{32}$ and $BW_{64}$, only $BW_{16}$ and $BW_{64}$ contain $D_n$ based cosets:
\begin{align}
BW_{16}  &=  (16,5,8) +2D_{16}, \label{eq_dn50}\\
BW_{64} &= (64,7,32) +2(64,42,8) +4D_{64}.
\end{align}
Their number of cosets are 
$\vert BW_{16}/2D_{16}\vert=2^5$, $\vert BW_{64}/4D_{64}\vert=2^{49}$, contrary to 
$\vert BW_{16}/4\mathbb{Z}^{16}\vert=2^{20}$, $\vert BW_{64}/8\mathbb{Z}^{64}\vert=2^{112}$. In addition, $\vert BW_{32}/4\mathbb{Z}^{32}\vert=2^{32}$.

Summarizing the above, the decoding complexity of $BW_{16}$ seems more affordable than those of $BW_{32}$ and $BW_{64}$.
 With reference to Eqs. (\ref{union_decode1}) and (\ref{eq_dn50}), we have {\begin{align}\label{union_decode2}
	Q_{BW_{16}}(\mathbf{t}) &=  Q_{2 D_{16} + \mathbf{g}'}(\mathbf{t}),\\
	\mathbf{g}' &=\mathrm{argmin}_{\mathbf{g}\in (16,5,8)} \left\| \mathbf{t} -	Q_{2 D_{16}+\mathbf{g}}(\mathbf{t}) \right\|. \nonumber
	\end{align}}
The pseudocode of the CVP algorithms $Q_{BW_{16}}$
and $Q_{D_{n}}$ are listed in Algorithm \ref{AlgL16}
and Algorithm \ref{AlgDn}, respectively.

\begin{algorithm}
	\caption{The closest vector algorithm   $Q_{BW_{16}}$}\label{AlgL16}
	\begin{algorithmic}[1]
		\Require A query vector $\mathbf{y}$.
		\Ensure The closest vector $\hat{\mathbf{v}}$ of $\mathbf{y}$ in $BW_{16}$.
		\State Define the codewords of  $(16,5,8)$ as $\mathbf{d}_1, \ldots, \mathbf{d}_{32}$
			\For{$t=1,\ldots 32$}
		\State	$ \mathbf{y}_t= (\mathbf{y} - \mathbf{d}_t)/2$ \;
		\State	$ \hat{\mathbf{v}}_t = 2 Q_{D_n}({\mathbf{y}_t}) + \mathbf{d}_t$ % \COMMENT{Employ the CVP sub-routine of $D_n$}\;
		\State	$\mathrm{Dist}_t = ||\mathbf{y} - \bar{\mathbf{v}}_t || $\;
		\EndFor
	\State	$ t^* = \min_t \mathrm{Dist}_t$ \;
	\State	$\hat{\mathbf{v}} = \hat{\mathbf{v}}_{t^*}$.
	\end{algorithmic}
\end{algorithm}

\begin{algorithm}
	\caption{The closest vector algorithm $Q_{D_{n}}$.}\label{AlgDn}
	\begin{algorithmic}[1]
		\Require A query vector $\mathbf{y}$.
		\Ensure The closest vector $\hat{\mathbf{v}}$ of $\mathbf{y}$ in $D_{n}$.
	\State	$\mathbf{u} = \lfloor \mathbf{y} \rceil$ \;
\State	$\delta = \vert\mathbf{y} - \mathbf{u} \vert$\;
\State	$ t^* = \max_t \vert y_t - u_t \vert $\;
\State	$\mathbf{v} = \mathbf{u}$ \;
	\If{$y_{t^*} - u_{t^*} > 0$}
   \State	{$v_{t^*} \leftarrow v_{t^*} +1$}
	\Else
	\State{$v_{t^*} \leftarrow v_{t^*} -1$}
	\EndIf
	\If{ $u_1+\cdots+u_n \mod 2 = 0$}
\State	{$\hat{\mathbf{v}} = \mathbf{u}$}
	\Else
	\State {$\hat{\mathbf{v}} = \mathbf{v}$}
	\EndIf
	\end{algorithmic}
\end{algorithm}

%\begin{algorithm}[t!]
%	\KwIn{A query vector $\mathbf{y}$.}
%	\KwOut{The closest vector $\hat{\mathbf{v}}$ of $\mathbf{y}$ in $D_{n}$.}    
%	$\mathbf{u} = \lfloor \mathbf{y} \rceil$ \;
%	$\delta = \vert\mathbf{y} - \mathbf{u} \vert$\;
%	$ t^* = \max_t |y_t - u_t|$\;
%	$\mathbf{v} = \mathbf{u}$ \;
%	\If{$y_{t^*} - u_{t^*} > 0$}
%	{$v_{t^*} \leftarrow v_{t^*} +1$}
%	\Else{$v_{t^*} \leftarrow v_{t^*} -1$}
%	\If{ $u_1+\cdots+u_n \mod 2 = 0$}
%	{$\hat{\mathbf{v}} = \mathbf{u}$}
%	\Else{$\hat{\mathbf{v}} = \mathbf{v}$}
%	\caption{The closest vector algorithm $Q_{D_{n}}$.}    
%	\label{AlgDn}  
%\end{algorithm}

\section{Improving FrodoPKE with Lattice Codes}
%\noindent \textbf{Example:}

\subsection{DFR Analysis in the Worst Case}
%\begin{figure}[t!]
%	\center
%	\includegraphics[width=0.45\textwidth]{}
%	\caption{The DFR performance of different lattices}
%	\label{fig:showVNR}
%\end{figure}

%For simplicity, we will calculate the DFR by modeling $\chi_{\sigma}$ as a continuous Gaussian distribution $\mathcal{N}(0,\sigma^2)$. 

{In FrodoPKE, $\chi_{\sigma}$ is chosen from a truncated discrete Gaussian that minimizes its R\'enyi divergence  from the target “ideal” distribution, as the loss of security can be evaluated by computing the R\'enyi divergence between the two distributions \cite{asiacrypt/Prest17}. To simplify the DFR analysis, $\chi_{\sigma}$ is treated as a continuous Gaussian distribution of $\mathcal{N}(0,\sigma^2)$.}

% and most lattice-based PKE schemes, the discrete Gaussian distribution $\chi_{\sigma}$ is chosen to closely approximate the continuous Gaussian distribution. {{With certain condition on the smoothing parameter, the R\'enyi divergence distance between $\mathbf{y}$ from  $\chi_{\sigma}$  and the approximated $\mathbf{y}$ from $\mathcal{N}(0,\sigma^2)$ can be tightly bounded through \cite[Lem. 3]{asiacrypt/Prest17}.}}

{Recall that Section 3.1 has formulated an $\bar{m}\bar{n}$-dimensional modulo lattice additive noise channel ``$\mathbf{y}=\mathbf{x}+\mathbf{n}~\mod~q$''.} The error term $\mathbf{n}$ has $\bar{m}\bar{n}$ entries, each entry has the form of $\mathbf{s}'\mathbf{e} + {e}'' -\mathbf{e}'\mathbf{s}$, and we have
\begin{align}
\mathbb{E}(\mathbf{s}'\mathbf{e} + {e}'' -\mathbf{e}'\mathbf{s}) & =0 \\
\mathbb{E}\left(\left\| \mathbf{s}'\mathbf{e} + {e}'' -\mathbf{e}'\mathbf{s} \right\|^2  \right) &= 2n' \sigma^4+\sigma^2. 
\end{align}
Although the entries of $\mathbf{n}$ are not independent, we can use information theory to give a worst case analysis. The information entropy of $\mathbf{n}$ is no larger than that of the joint distribution of  $\bar{m}\bar{n}$ i.i.d. $\mathcal{N}(0,2n' \sigma^4+\sigma^2)$ (also known as Hadamard’s Inequality \cite{cover1999elements}). 
We adopt this ``largest entropy'' setting to approximate the DFR, which amounts to the error rate analysis of lattice codes over an AWGN channel.

%of FrodoPKE is no larger than the error probability of decoding a lattice vector over an AWGN  channel. 

{The DFR of the PKE protocol can be estimated by using the 
	decoding error probability $P_e$ of a lattice codeword. To proceed, we set the coarse lattice $\Lambda_c=q\mathbb{Z}^{n}$ ($n=\bar{m}\bar{n}$) as required by the PKE protocol, and identify a general fine lattice $\Lambda_f$ with  kissing number $\tau$, length of the shortest non-zero lattice vector $\lambda_1$, and volume  
	\begin{equation}
	\mathrm{Vol}(\Lambda_f)=\frac{\mathrm{Vol}(\Lambda_c)}{2^{nB}}.
	\end{equation}}
%{Based on the definition of 
%	Hermite parameter $\gamma$ and $\mathrm{Vol}(\Lambda_c)=q^n$, we have
%	\begin{equation}
%	\lambda_1 = \sqrt{\gamma} \left(q^n/2^{nB} \right)^{1/n}.
%	\end{equation}}

%A transmitted lattice codeword is chosen from   $\mathcal{C}(\Lambda_f,\Lambda_c=q\mathbb{Z}^{\bar{m}\bar{n}})$. 
%For a transmitted lattice vector $\mathbf{x}$ that has the same dimension as the channel, the DFR of the PKE protocol can be estimated by using the 
%decoding error probability $P_e$ of a lattice codeword.   Let  
%$\lambda_1$ and $\tau$ be  the minimum Euclidean distance and the kissing number of $\Lambda_f$, respectively.

Based on Theorem \ref{thm_er}, the DFR can be evaluated as
\begin{equation}
	P_e \triangleq  \mathrm{Pr}\left(\hat{\mu} \neq \mu\right) =
	\mathrm{Pr}\left( Q_{\Lambda_{f}}(\mathbf{n})
	\notin \Lambda_{c}\right)
\leq \mathrm{Pr}\left( Q_{\Lambda_{f}}(\mathbf{n}) \neq \mathbf{0} \right).
\end{equation}
Assume that $\mathbf{n}$ admits an i.i.d. Gaussian noise $\mathcal{N}(0,\bar{\sigma}^2)$ with $\bar{\sigma}=  \sigma \sqrt{2n'\sigma^2+1}$, it follows from 
\cite[Chap. 3]{Conway1999}, \cite[Eq. 4]{tit/BoutrosVRB96} that 
%
%\begin{equation}\label{pebound1}
%P_e \triangleq \mathrm{Pr}\left( Q_{\Lambda_{f}}(\mathbf{n}) \neq \mathbf{0} \right) 	 
%\leq \sum_{\mathbf{u}\neq \mathbf{x}, \mathbf{u}\in \Lambda_f} P(\mathbf{x} \rightarrow \mathbf{u}), 
%\end{equation}
%where $P(\mathbf{x} \rightarrow \mathbf{u})$ is the pairwise error probability. By considering i.i.d. Gaussian noise $\mathcal{N}(0,\bar{\sigma}^2)$ with $\bar{\sigma}=  \sigma \sqrt{2n'\sigma^2+1}$, Eq. (\ref{pebound1}) becomes
\begin{align}
\mathrm{Pr}\left( Q_{\Lambda_{f}}(\mathbf{n}) \neq \mathbf{0} \right) 
& \lesssim
\frac{\tau}{2} \mathrm{erfc}\left( \frac{\lambda_1/2}{ \sqrt{2} \bar{\sigma}}\right) \label{eq_erratefinal01}\\
& = \frac{\tau}{2} \mathrm{erfc}\left( \frac{\sqrt{\gamma}q}{  2^{B+3/2} \bar{\sigma}}\right), \label{eq_erratefinal}
\end{align}
where the second equality is obtained by substituting $\lambda_1 = \sqrt{\gamma} \left(q^n/2^{nB} \right)^{1/n}$, which is based on the definition of 
Hermite parameter $\gamma$ and $\mathrm{Vol}(\Lambda_c)=q^n$. Note that ``$\lesssim$'' denotes an approximate ``$\leq$'', which holds in the high signal to noise ratio scenario (i.e., $\lambda_1 \gg \bar{\sigma}$) \cite[Chap. 3]{Conway1999}. In Fig. \ref{fig_showEI},  by using $\mathbb{Z}^8$ and $E_8$ as the fine lattice, respectively, we plot both their theoretical DFR upper bounds and the actual simulated DFRs, which suggests the upper bound in (\ref{eq_erratefinal01}) is tight.

\begin{figure}[t!]
	\center
	\includegraphics[width=0.5\textwidth]{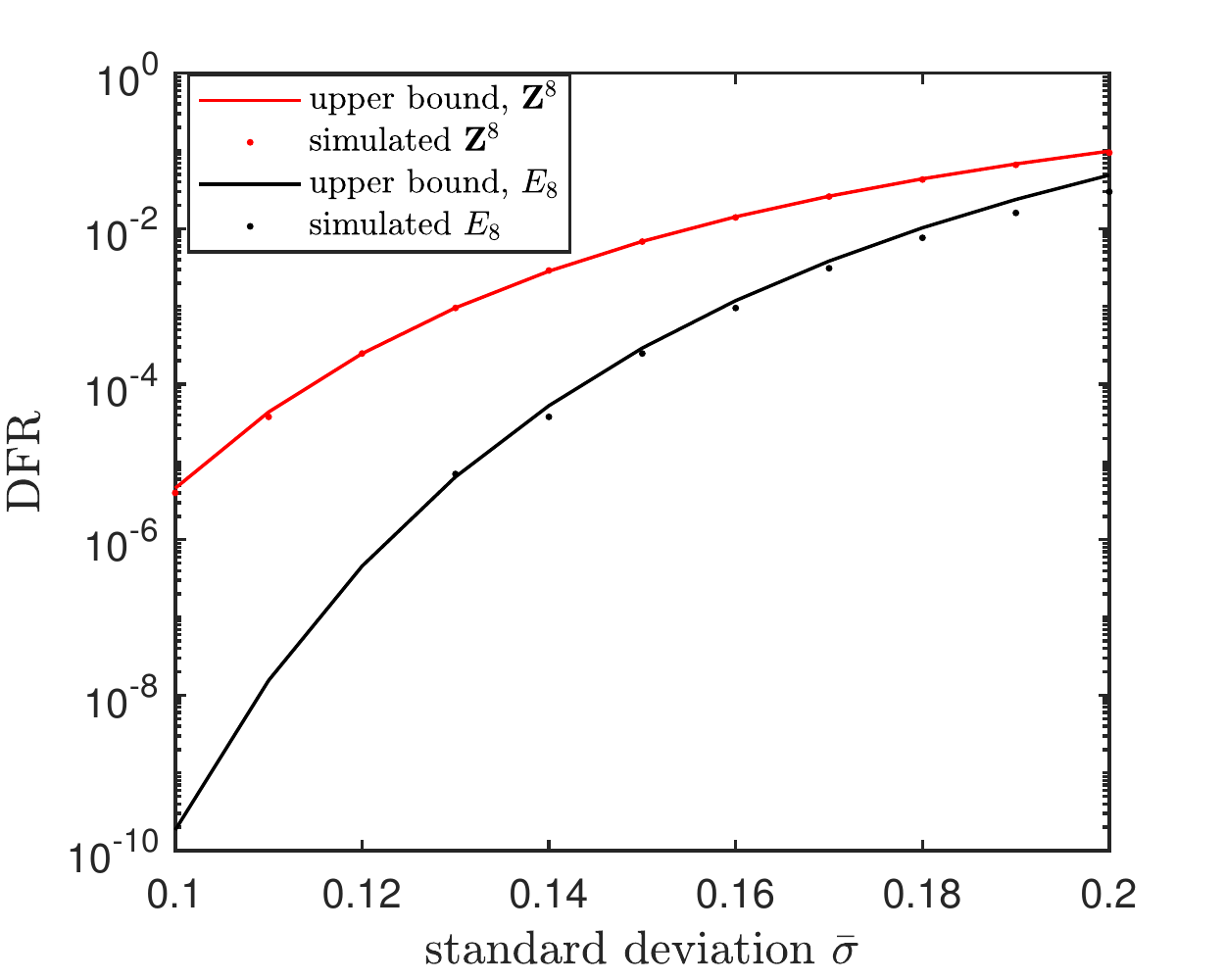}
	\caption{The DFRs of naive modulation and $E_8$ based coded modulation.}
	\label{fig_showEI}
\end{figure}

The DFR formula is determined by a few factors: (i) The Hermite parameter $\gamma$, which describes the density of lattice points packed in a unit volume for a given minimum Euclidean distance. (ii) The kissing number $\tau$ that measure the number of facets in the Voronoi region of a lattice. (iii) The modulus $q$ in LBC. (iv) The averaged number of encoded bits $B$. (v) The standard deviation $\bar{\sigma}$ of the effective noise.

%The application of Eq. (\ref{eq_erratefinal}) is depicted in Fig. \ref{fig:showVNR}.

\subsection{Flexible Lattice Parameter Settings}
%To define an infinite sequence of sphere packings in dimensions $2^r$, $r = 1, 2, 3, \ldots$, which include the densest packings known in dimensions $2$, $4$, $8$ and $16$ \cite{Conway1999}, 

Finding the densest lattice structure is a well-studied topic, and the Hermite parameter $\gamma$ and kissing number $\tau$ of some low-dimensional optimal lattices can be found in \cite{Conway1999}. Therefore, the key challenge is to judiciously design $B$, $q$, $\bar{\sigma}$ based on chosen $\gamma$ and $\tau$.

\noindent  \textit{i) On the kissing number and Hermite parameter.}
We adopt Barnes--Wall lattices to construct lattice codes. 
Though being less dense than other known packings in dimensions $32$ and higher,  they offer the densest packings in dimensions $2$, $4$, $8$ and $16$ \cite{Conway1999}. Moreover, many lattice parameters are available \cite{Conway1999}[P. 151]. In dimension $n=2^r$ with $r = 1, 2, 3, \ldots$, the kissing number is
\begin{equation}\label{eq_kissing}
\tau = (2+2)(2+2^2)\cdots (2+2^r),
\end{equation}
and the Hermite parameter is
\begin{equation}
\gamma_{r}= 2^{(r-1)/2},
\end{equation}
which increases without limit. 	If $\Lambda'$ is constructed from the $k$-fold Cartesian product of $\Lambda \subset \mathbb{R}^m$, i.e, $\Lambda'=\Lambda\times\cdots \times \Lambda \subset \mathbb{R}^{km}$, then we have
\begin{align}
\tau(\Lambda') &= k	\tau(\Lambda) \\
\gamma_{r}(\Lambda') &= \gamma_{r}(\Lambda).
\end{align}

Table \ref{tab_kissing} summarizes the parameters of some low-dimensional optimal lattices and the Barnes--Wall lattices. 
 
\begin{table}[th!]
	\begin{center}
		\begin{minipage}{1\textwidth}
			\caption{The properties of some popular lattices.}\label{tab_kissing}
			\begin{tabular*}{1\textwidth}{p{0.18\textwidth}|c|c|c|c|c|c|c}
				\hline 
				\multirow{2}{*}{ } & \multirow{2}{*}{$\text{\ensuremath{\mathbb{Z}}}$} & \multirow{2}{*}{$D_{4}$} & \multirow{2}{*}{$E_{8}$} & \multirow{2}{*}{$BW_{16}$} & \multirow{2}{*}{$\Lambda_{24}$} & \multirow{2}{*}{$BW_{32}$} & \multirow{2}{*}{$BW_{64}$}\tabularnewline
				&  &  &  &  &  &  & \tabularnewline
				\hline 
				\hline 
				Hermite parameter $\gamma$ & $1$ & $2^{1/2}$ & $2$ & $2^{3/2}$ & $4$ & $4$ & $2^{5/2}$\tabularnewline
				{Kissing number} $\tau$ & $2$ & $24$ & $240$ & $4320$ & $196560$ & $146880$ & $9694080$\tabularnewline
				Volume $\mathrm{Vol}(\Lambda_t)$ & $1$ & $2$ & $1$ & $2^{12}$ & $1$ & $2^{32}$ & $2^{80}$\tabularnewline
%				Practical CVP decoding & $\checkmark$ & $\checkmark$ \cite{ConwayS82a} & $\checkmark$ \cite{ConwayS82a}  & $\checkmark$ [This work] & $\checkmark$ \cite{iacr/Poppelen16} & $\times$ & $\times$\tabularnewline
				\hline 
			\end{tabular*}
		\end{minipage}
	\end{center}
\end{table}
 
%\subsubsection{Achievable Rates}
%The fine lattice can be chosen as the rotation of a Construction-D lattice. 

%To concatenate low dimensional lattices, the Cartesian product is needed.
%\begin{defn}[Cartesian product]
%	The Cartesian product of two lattices $\Lambda_1$ and $\Lambda_2$ of dimensions $n$ is an $2n$ dimensional lattice: $\Lambda_1\times \Lambda_2 = \left\lbrace
%	(\mathbf{x},\mathbf{y}):\, \mathbf{x}\in \Lambda_1, \mathbf{y}\in \Lambda_2 \right\rbrace.$
%\end{defn}
\noindent  \textit{ii) On the information rate $B$.}
Since the coarse lattice in FrodoPKE is $\Lambda_c=q \mathbb{Z}^{64}$, let $2^{\Delta}=q/p$ be a power of $2$ with $p$ being a free parameter. 
By choosing a small dimensional lattice   $\Lambda_t \in \mathbb{R}^t$, $t$ dividing $n=64$,
 and  $p\mathbb{Z}^t   \subset \Lambda_t$,   the fine lattice is a Cartesian product of  $\Lambda_t$:
 \begin{equation}
 \Lambda_f=	2^{\Delta} \Lambda_t \times \cdots \times \Lambda_t.
 \end{equation}
 Then the number of encoded bits $B$ per dimension is dictated by $p$:
 \begin{align} \label{eq_rate1}
 B & =  \frac{1}{n} \log_2 \left(   \frac{\mathrm{Vol}(\Lambda_c)}{\mathrm{Vol}(\Lambda_f)}    \right)   = \frac{1}{t} \log_2  \frac{p^t}{ \mathrm{Vol}(\Lambda_t)}. %\label{rate11} 
 \end{align}
For a Construction-A or Construction-D lattice, one always has
\begin{equation}
	p\mathbb{Z}^t \subset 2^a \mathbb{Z}^t \subset \Lambda_t.
\end{equation}
While the $E_8$ lattice has half integers, it holds that $4\mathbb{Z}^8 \subset 2E_8$. 

%In addition, the fine lattice can be rotated to support more flexible rate control.
%% Define the $n$-dimensional rotation matrix as
%%\begin{equation}
%%\mathbf{R}_n = \mathbf{I}_{n/2} \otimes \mathbf{R}_2,
%%\end{equation}
%%where \begin{equation}
%%\mathbf{R}_2 = \left[\begin{matrix} 1&-1\\1&1\end{matrix}\right].
%%\end{equation} 
%By setting $\Lambda_f = 2^{\Delta} (\mathbf{R}_t \Lambda_t) \times \cdots \times (\mathbf{R}_t \Lambda_t)$,     
%\begin{equation}
%\mathbf{R}_{2t} = \left[\begin{matrix} \mathbf{R}_{t}& 0\\0&\mathbf{R}_{t}\end{matrix}\right],
%~\mathbf{R}_2 = \left[\begin{matrix} 1& 1\\1&-1\end{matrix}\right],
%\end{equation} 
%  we have
%  \begin{equation} 
%  p\mathbb{Z}^t \subset 2^{a+1} \mathbb{Z}^t \subset 2^{a} \mathbf{R}_t\mathbb{Z}^t   \subset \mathbf{R}_t \Lambda_t,
%  \end{equation}
%  and the information rate is
%\begin{equation} \label{eq_rate2}
%B  =   \frac{1}{t} \log_2  \frac{p^t}{2^{t/2}  \mathrm{Vol}(\Lambda_t)}. 
%%\label{rate21}
%\end{equation}

Based on different fine lattices, 
 we enumerate some feasible number of encoded bits in FrodoPKE below, denoted as $64B$.
 \begin{itemize}
 	\item $\Lambda_f=2^{\Delta}\cdot \mathbb{Z}^{64}$, $64B=64, 128, 192, 256, \ldots$
 	\item $\Lambda_f=2^{\Delta}\cdot D_4^{16}$, $64B=112, 176, 240, 304, \ldots$
 	\item $\Lambda_f=2^{\Delta}\cdot E_8^8$, $64B=64, 128, 192, 256, \ldots$
 		 	\item $\Lambda_f=2^{\Delta}\cdot  BW_8^{8}$, $64B=96, 160, 224, 288, \ldots$
 	\item $\Lambda_f=2^{\Delta}\cdot BW_{16}^4$, $64B=80, 144, 208, 272, \ldots$
 	\item $\Lambda_f=2^{\Delta}\cdot BW_{32}^2$, $64B=64, 128, 192, 256, \ldots$
 	\item $\Lambda_f=2^{\Delta}\cdot BW_{64}$, $64B=112, 176, 240, 304 \ldots$
 \end{itemize}
 
%The following lemma is immediate.
%\begin{lem}
%	If $\Lambda'$ is constructed from the $k$-fold Cartesian product of $\Lambda \subset \mathbb{R}^n$, i.e, $\Lambda'=\Lambda\times\cdots \times \Lambda \subset \mathbb{R}^{kn}$, then we have
%	\begin{align}
%	\tau(\Lambda') &= k	\tau(\Lambda) \\
%	\gamma_{r}(\Lambda') &= \gamma_{r}(\Lambda).
%	\end{align}
%\end{lem}

%Specifically, we have
%\begin{align}
%|\mathbf{R}_8 E_8 /2 \mathbb{Z}^{n}| &=2^4, \\
%| 2 E_8 /4 \mathbb{Z}^{n}| &=2^8, \\
%| \Lambda_{24} /2 \mathbb{Z}^{n}| &=2^{12}, \\
%| \mathbf{R}_{24} \Lambda_{24} /4 \mathbb{Z}^{n}| &=2^{24}.
%\end{align}
%By further scaling the coarse lattice with a factor of $2^t$, the $E_8$ lattice (or its rotation) can achieve $4+8t$ (or $8t$) bits of information per $8$ dimension, and the $\Lambda_{24}$ lattice (or its rotation) can achieve $12+24t$ (or $24t$) bits of information per $24$ dimension.

%By setting the fine lattice as either  $\Lambda$ or $\mathbf{R}_n \Lambda$, the size of the lattice code becomes
%\begin{align}
%| \Lambda/p 2^t \mathbb{Z}^{n}| &= \frac{p^n 2^{tn}}{ V(\Lambda)} \label{rate1} \\
%| \mathbf{R}_n \Lambda/p 2^t \mathbb{Z}^{n}| &=  \frac{p^n 2^{tn}}{2^{n/2}  V(\Lambda)}. \label{rate2}
%\end{align}
%The

\subsection{Improved Frodo Parameters}
%Important FrodoKEM parameters are given in Table \ref{tab_GcDimN2}.
Frodo-640, Frodo-976 and Frodo-1344 target security levels 1, 3 and 5 in the NIST PQC Standardization, respectively. To resist the attack exploiting DFRs \cite{impact_dfr_DAnversVV18}, the DFRs at levels 1, 3 and 5 should be no larger than $2^{-128}$, $2^{-192}$ and $2^{-256}$, respectively.

Compared to the standard Frodo protocol, 
our scheme only modifies the labeling function, the corresponding CVP algorithm, and the choice of parameters $\sigma, B, q$. The security levels  refer to the primal and dual attack
via the FrodoKEM script $\mathsf{ pqsec.py}$ \cite{pqsec2016}.
The subscripts C, Q and P denote ``classical'', ``quantum'' and ``paranoid'' estimates on the concrete bit-security given by parameters ($n', \sigma, q$).
We propose three sets of parameters in Tables 2 and 3: the first aims at improving the security level and the second at reducing the communication bandwidth. Frodo-640/976/1344 are the original parameter sets. The 
parameters that we have changed are highlighted in bold-face blue color, and 
other values that have altered as a consequence of this change are marked with normal blue color.

\noindent\textbf{Parameter set 1: Improved security strength}

We increase $\sigma$ while keeping $n', q$ unchanged in Frodo-640/976/1344. As shown in Table \ref{tab_GcDimN2}, error correction via $E_8$, $BW_{16}$ and $BW_{32}$ can improve the security level of the original Frodo-640/976/1344 by $6$ to $16$ bits. While $\mathbb{Z}^{64}$, $E_{8}^8$ and $BW_{32}^2$ can naturally encode $128$, $192$ and $256$ bits per instance,  $BW_{16}^4$ only supports $144$, $208$, and $272$ bits. The
  $BW_{32}$ based parameter set offers the highest security enhancement in the table, but its CVP decoding complexity of $O(2^{32})$ makes it less attractive. 

We recommend the $E_8$ and $BW_{16}$ based parameter sets. The information rate of 	Frodo-640/976/1344-$E_{8}$ matches well with that of the original Frodo-640/976/1344, and the classical security level has been increased by $7$ or $8$ bits, respectively. Frodo-640/976/1344-$BW_{16}$ maintains basically the same security level  as that of Frodo-640/976/1344-$E_{8}$, while the information rate is slightly higher, either $B=2.25$, $3.25$ or $4.25$.
 
\noindent\textbf{Parameter set 2: Reduced size of ciphertext}

Recall that the size of ciphertext is $(\bar{m}n'+\bar{m}\bar{n})\log_2(q)/8$ bytes, so we  reduce $q$ to achieve higher bandwidth efficiency. To keep the DFR small, we also reduce $\sigma$ to various degrees, as long as the security level is no smaller.

As shown in Table \ref{tab_GcDimN3}, by reducing $q$ from $2^{15}$ to $2^{14}$, the ciphertext size $|c|$ can be reduced from $9720$ bytes to $9072$ bytes in Frodo-640, from $15744$ bytes to $14760$ bytes in Frodo-976, and from $21632$ bytes to $20280$ bytes in Frodo-1344. Again, the $E_8$ and $BW_{16}$ based parameter sets are recommended.

{It is interesting to note that the lattice-code based FrodoPKE can also be extended to a KEM for symmetric lightweight cryptography algorithms. For instance, via setting $\Lambda_f= 2^{\Delta}\cdot BW_{16}^4, \Lambda_c= 2^{\Delta}\cdot 4 \mathbb{Z}^{64}$, it is possible to tightly exchange $80$ bits for the PRESENT algorithm \cite{access/ThakorRK21}.}

\subsection{IND-CCA Security}
The lattice codes based PKE/KEM also features chosen ciphertext secure (IND-CCA) security.
Similarly to  the argument in \cite{naehrig2017frodokem}, the IND-CPA security of FrodoPKE is upper bounded by the advantage of the decision-LWE problem for the same parameters and error distribution. 
To endow an IND-CPA encryption scheme with IND-CCA security, the post-quantum secure version of the Fujisaki-Okamoto transform \cite{DBLP:conf/crypto/FujisakiO99,DBLP:conf/tcc/HofheinzHK17} can be applied. {When bounding the probability that an attacker can undermine a given cryptographic scheme in the quantum random-oracle model, 
	security proofs use the number of decryption queries submitted by the CCA adversary. \cite[Theorem 4.3]{frodothm2023} shows that the impact of decryption failure is given by $4q_G P_e$ where $q_G$ is the number of quantum oracle queries and $P_e$ is the DFR. Then, using
	bounds on decryption failure established above, one can argue that such queries pose no danger.}

\begin{sidewaystable}
	\sidewaystablefn%
	\begin{center}
		\begin{minipage}{\textheight}
			\caption{The recommended parameter sets with higher security.}\label{tab_GcDimN2}
		\begin{tabular}{c|c|c|c|c|c|c|c|c|c|c|c}
		\hline 
		\multirow{2}{*}{ } & \multicolumn{2}{c|}{Structure of lattice code} & \multirow{2}{*}{$n',\bar{n},\bar{m}$} & \multirow{2}{*}{$q$} & \multirow{2}{*}{$\sigma$} & \multirow{2}{*}{$B$} & \multirow{2}{*}{DFR} & \multicolumn{1}{c|}{$c$ size} & \multicolumn{3}{c}{Security}\tabularnewline
		\cline{2-3} \cline{10-12} 
		& $\Lambda_{f}$ & $\Lambda_{c}$ &  &  &  &  &  & (bytes) & C & Q & P\tabularnewline
		\hline 
		Frodo-640 & $2^{13}\cdot\mathbb{Z}^{64}$ & $2^{15}\cdot \mathbb{Z}^{64}$ & $640,8,8$ & $2^{15}$ & $2.75$ & $2$ & $2^{-164}$ & $9720$ & $149$ & $136$ & $109$\tabularnewline
		Frodo-640-$E_{8}$ & $2^{13}\cdot E_{8}^{8}$ & $2^{15}\cdot \mathbb{Z}^{64}$ & $640,8,8$ & $2^{15}$ & \textcolor{blue}{ $\mathbf{3.25}$} & $2$ & $2^{-164}$ & $9720$ & \textcolor{blue}{$156$} & \textcolor{blue}{$142$} & \textcolor{blue}{$113$}\tabularnewline
		Frodo-640-$BW_{16}$ & $2^{12}\cdot BW_{16}^{4}$ & $2^{15}\cdot \mathbb{Z}^{64}$ & $640,8,8$ & $2^{15}$ & \textcolor{blue}{$\mathbf{3.23}$} & \textcolor{blue}{$\mathbf{2.25}$} & 
		$2^{-164}$ & $9720$ & \textcolor{blue}{$155$} & \textcolor{blue}{$142$} & \textcolor{blue}{$113$}\tabularnewline
		Frodo-640-$BW_{32}$ & $2^{12}\cdot BW_{32}^2$ & $2^{15}\cdot \mathbb{Z}^{64}$ & $640,8,8$ & $2^{15}$ & \textcolor{blue}{$\mathbf{3.83}$} &  {$2$} & $2^{-164}$ & $9720$ & \textcolor{blue}{$162$} & \textcolor{blue}{$148$} & \textcolor{blue}{$118$}\tabularnewline
		%---------------976--------------------%
		\hline 
		Frodo-976 & $2^{13}\cdot\mathbb{Z}^{64}$ & $2^{16}\cdot \mathbb{Z}^{64}$ & $976,8,8$ & $2^{16}$ & $2.3$ & $3$ & $2^{-220}$ & $15744$ & $216$ & $196$ & $156$\tabularnewline
		Frodo-976-$E_{8}$ & $2^{13}\cdot E_{8}^{8}$ & $2^{16}\cdot \mathbb{Z}^{64}$ & $976,8,8$ & $2^{16}$ & 
		\textcolor{blue}{$\mathbf{2.72}$} & $3$ & $2^{-220}$ & $15744$ & \textcolor{blue}{$224$} & \textcolor{blue}{$204$} & \textcolor{blue}{$162$}\tabularnewline
		Frodo-976-$BW_{16}$ & $2^{12}\cdot BW_{16}^{4}$ & $2^{16}\cdot \mathbb{Z}^{64}$ & $976,8,8$ & $2^{16}$ & \textcolor{blue}{$\mathbf{2.71}$} & \textcolor{blue}{$\mathbf{3.25}$} & 
		$2^{-220}$ & $15744$ & \textcolor{blue}{$224$} & \textcolor{blue}{$204$} & \textcolor{blue}{$161$}\tabularnewline
		Frodo-976-$BW_{32}$ & $2^{12}\cdot BW_{32}^2$ & $2^{16}\cdot \mathbb{Z}^{64}$ & $976,8,8$ & $2^{16}$ & \textcolor{blue}{$\mathbf{3.21}$} &  
		{$3$} & $2^{-220}$ & $15744$ & \textcolor{blue}{$232$} & \textcolor{blue}{$211$} & \textcolor{blue}{$167$}\tabularnewline
		%---------------1344--------------------%
		\hline 
		Frodo-1344 & $2^{12}\cdot\mathbb{Z}^{64}$ & $2^{16}\cdot \mathbb{Z}^{64}$ & $1344,8,8$ & $2^{16}$ & $1.4$ &
		$4$ & $2^{-290}$ & $21632$ & 
		$282$ & $256$ & $203$\tabularnewline
		Frodo-1344-$E_{8}$ & $2^{12}\cdot E_{8}^{8}$ & $2^{16}\cdot \mathbb{Z}^{64}$ & $1344,8,8$ & $2^{16}$ & 
		\textcolor{blue}{$\mathbf{1.66}$} & 
		$4$ & $2^{-290}$ & $21632$ & \textcolor{blue}{$292$} & \textcolor{blue}{$265$} & \textcolor{blue}{$210$}\tabularnewline
		Frodo-1344-$BW_{16}$ & $2^{11}\cdot BW_{16}^{4}$ & $2^{16}\cdot \mathbb{Z}^{64}$ & $1344,8,8$ & $2^{16}$ & \textcolor{blue}{$\mathbf{1.66}$} & \textcolor{blue}{$\mathbf{4.25}$} & 
		$2^{-290}$ & $21632$ & \textcolor{blue}{$292$} & \textcolor{blue}{$265$} & \textcolor{blue}{$210$}\tabularnewline
		Frodo-1344-$BW_{32}$ & $2^{11}\cdot BW_{32}^2$ & $2^{16}\cdot \mathbb{Z}^{64}$ & $1344,8,8$ & $2^{16}$ & \textcolor{blue}{$\mathbf{1.97}$} &  
		$4$ & $2^{-290}$ & $21632$ & \textcolor{blue}{$302$} & \textcolor{blue}{$275$} & \textcolor{blue}{$217$}\tabularnewline
		\hline 
	\end{tabular}
		\end{minipage}
	\end{center}
\end{sidewaystable}

\begin{sidewaystable}
	\sidewaystablefn%
	\begin{center}
		\begin{minipage}{\textheight}
			\caption{The recommended parameter sets with smaller size of ciphertext.}\label{tab_GcDimN3}
	\begin{tabular}{c|c|c|c|c|c|c|c|c|c|c|c}
	\hline 
	\multirow{2}{*}{ } & \multicolumn{2}{c|}{Structure of lattice code} & \multirow{2}{*}{$n',\bar{n},\bar{m}$} & \multirow{2}{*}{$q$} & \multirow{2}{*}{$\sigma$} & \multirow{2}{*}{$B$} & \multirow{2}{*}{DFR} & \multicolumn{1}{c|}{$c$ size} & \multicolumn{3}{c}{Security}\tabularnewline
	\cline{2-3} \cline{10-12} 
	& $\Lambda_{f}$ & $\Lambda_{c}$ &  &  &  &  &  & (bytes) & C & Q & P\tabularnewline
	\hline 
	Frodo-640 & $2^{13}\cdot\mathbb{Z}^{64}$ & $2^{15}\cdot \mathbb{Z}^{64}$ & $640,8,8$ & $2^{15}$ & $2.75$ & $2$ & $2^{-164}$ & $9720$ & $149$ & $136$ & $109$\tabularnewline
	Frodo-640-$E_{8}$ & $2^{12}\cdot E_{8}^{8}$ & $2^{14}\cdot \mathbb{Z}^{64}$ & $640,8,8$ & \textcolor{blue}{$\mathbf{2^{14}}$} & \textcolor{blue}{$2.30$} & $2$ &  {$2^{-164}$} & \textcolor{blue}{$9072$} & \textcolor{blue}{$156$} & \textcolor{blue}{$143$} & \textcolor{blue}{$114$}\tabularnewline
	Frodo-640-$BW_{16}$ & $2^{11}\cdot BW_{16}^{4}$ & $2^{14}\cdot \mathbb{Z}^{64}$ & $640,8,8$ & \textcolor{blue}{$\mathbf{2^{14}}$} & \textcolor{blue}{$2.29$} & \textcolor{blue}{$\mathbf{2.25}$} &  {$2^{-164}$} & \textcolor{blue}{$9072$} & \textcolor{blue}{$156$} & \textcolor{blue}{$143$} & \textcolor{blue}{$114$}\tabularnewline
	Frodo-640-$BW_{32}$ & $2^{11}\cdot BW_{32}^2$ & $2^{14}\cdot \mathbb{Z}^{64}$ & $640,8,8$ & \textcolor{blue}{$\mathbf{2^{14}}$} & \textcolor{blue}{$2.71$} &  {$2$} &  {$2^{-164}$} & \textcolor{blue}{$9072$} & \textcolor{blue}{$163$} & \textcolor{blue}{$149$} & \textcolor{blue}{$118$}\tabularnewline
	%-------------------976-------------------%
	\hline 
	Frodo-976 & $2^{13}\cdot\mathbb{Z}^{64}$ & $2^{16}\cdot \mathbb{Z}^{64}$ & $976,8,8$ & $2^{16}$ & $2.3$ & $3$ & $2^{-220}$ & $15744$ & $216$ & $196$ & $156$\tabularnewline
	Frodo-976-$E_{8}$ & $2^{12}\cdot E_{8}^{8}$ & $2^{15}\cdot \mathbb{Z}^{64}$ & $976,8,8$ & \textcolor{blue}{$\mathbf{2^{15}}$} & \textcolor{blue}{$1.93$} & $3$ &  {$2^{-220}$} & \textcolor{blue}{$14760$} & \textcolor{blue}{$225$} & \textcolor{blue}{$205$} & \textcolor{blue}{$162$}\tabularnewline
	Frodo-976-$BW_{16}$ & $2^{11}\cdot BW_{16}^{4}$ & $2^{15}\cdot \mathbb{Z}^{64}$ & $976,8,8$ & \textcolor{blue}{$\mathbf{2^{15}}$} & \textcolor{blue}{1.92} & \textcolor{blue}{$\mathbf{3.25}$} &  
	{$2^{-220}$} & 
	\textcolor{blue}{$14760$} & \textcolor{blue}{$224$} & \textcolor{blue}{$204$} & \textcolor{blue}{$162$}\tabularnewline
	Frodo-976-$BW_{32}$ & $2^{11}\cdot BW_{32}^2$ & $2^{15}\cdot \mathbb{Z}^{64}$ & $976,8,8$ & \textcolor{blue}{$\mathbf{2^{15}}$} & \textcolor{blue}{$2.27$} &  {$3$} & {$2^{-220}$} & 
	\textcolor{blue}{$14760$} & \textcolor{blue}{$233$} & \textcolor{blue}{$212$} & \textcolor{blue}{$168$}\tabularnewline
	%---------------1344--------------------%
	\hline 
	Frodo-1344 & $2^{12}\cdot\mathbb{Z}^{64}$ & $2^{16}\cdot \mathbb{Z}^{64}$ & $1344,8,8$ & $2^{16}$ & $1.4$ &
	$4$ & $2^{-290}$ & $21632$ & 
	$282$ & $256$ & $203$\tabularnewline
	Frodo-1344-$E_{8}$ & $2^{11}\cdot E_{8}^{8}$ & $2^{15}\cdot \mathbb{Z}^{64}$ & $1344,8,8$ & \textcolor{blue}{$\mathbf{2^{15}}$} & 
	\textcolor{blue}{${1.18}$} & 
	$4$ & $2^{-290}$ & \textcolor{blue}{$20280$} & \textcolor{blue}{$291$} & \textcolor{blue}{$265$} & \textcolor{blue}{$210$}\tabularnewline
	Frodo-1344-$BW_{16}$ & $2^{10}\cdot BW_{16}^{4}$ & $2^{15}\cdot \mathbb{Z}^{64}$ & $1344,8,8$ & \textcolor{blue}{$\mathbf{2^{15}}$} & \textcolor{blue}{${1.17}$} & \textcolor{blue}{$\mathbf{4.25}$} & 
	$2^{-290}$ & \textcolor{blue}{$20280$} & \textcolor{blue}{$291$} & \textcolor{blue}{$265$} & \textcolor{blue}{$209$}\tabularnewline
	Frodo-1344-$BW_{32}$ & $2^{10}\cdot BW_{32}^2$ & $2^{15}\cdot \mathbb{Z}^{64}$ & $1344,8,8$ & \textcolor{blue}{$\mathbf{2^{15}}$} & \textcolor{blue}{${1.39}$} &  
	$4$ & $2^{-290}$ & \textcolor{blue}{$20280$} & \textcolor{blue}{$302$} & \textcolor{blue}{$275$} & \textcolor{blue}{$217$}\tabularnewline
	\hline 
\end{tabular}
		\end{minipage}
	\end{center}
\end{sidewaystable}

\section{Conclusions}

While the cryptography community is more familiar with random lattices for security, this paper shows that low-dimensional structure lattices can improve the error correction performance in FrodoPKE. The rationale is that lattice codes represent coded modulation, the elegant combination of ECC and modulation. 
The bridge that connects lattice codes and FrodoPKE (and more generally lattice-based PKEs) is the modulo $q$ operation, which induces hypercube shaping. By presenting an efficient lattice labeling function, as well as a general formula to estimate the DFR, lattice based coded modulation becomes practical in LBC.  By using some low-dimensional optimal lattices, a few improved parameter sets for FrodoPKE have been achieved, with either higher security or smaller ciphertext sizes.
The lattice coding techniques in this work can be similarly 
applied to Ring/Module LWE-based PKEs.

\section*{Appendix A}
The lattice bases of  $E_{8}$, $BW_{8}$  and $BW_{16}$ can be respectively chosen as 
\begin{align*}
&\left[\begin{array}{cccccccc}
2  &  -1  &   0  &   0   &  0   &  0   &  0  &   0.5\\
0 &    1   & -1  &   0  &   0  &   0  &   0  &   0.5\\
0  &   0 &    1 &   -1  &   0 &    0   &  0  &   0.5\\
0 &    0  &   0 &    1  &  -1  &   0   &  0  &   0.5\\
0 &    0  &   0  &   0  &   1  &  -1   &  0  &   0.5\\
0 &    0  &   0  &   0  &   0  &   1  &  -1  &   0.5\\
0 &    0  &   0  &   0  &   0  &   0 &    1  &   0.5\\
0 &    0  &   0  &   0  &   0  &   0 &    0  &   0.5
\end{array}\right],
 \, \, 
 \left[\begin{array}{cccccccc}
1   & 1   &  1 &   1  &   2 &    2   &  2   &  2 \\
1  &   1  &   1  &   0 &    2  &   0  &   0 &    0\\
1  &   1 &    0  &   1 &    0 &   2   &  0   &  0\\
1  &   1 &    0  &   0 &    0  &   0  &   0  &   0\\
1  &   0  &   1  &   1 &    0 &    0  &   2  &   0\\
1  &   0 &    1  &   0 &    0 &    0  &   0  &   0\\
1  &   0  &   0  &   1 &    0  &   0  &   0  &   0\\
1  &   0 &    0  &   0 &    0  &   0  &   0  &   0
\end{array}\right]\\
&
\left[\begin{array}{cccccccccccccccc}
1&1&1&1&1&2&2&2&2&2&2&2&2&2&2&4\\
1&1&1&1&0&2&2&0&2&0&0&2&0&0&0&0\\
1&1&1&0&1&2&0&2&0&2&0&0&2&0&0&0\\
1&1&1&0&0&2&0&0&0&0&0&0&0&0&0&0\\
1&1&0&1&1&0&2&2&0&0&2&0&0&2&0&0\\
1&1&0&1&0&0&2&0&0&0&0&0&0&0&0&0\\
1&1&0&0&1&0&0&2&0&0&0&0&0&0&0&0\\
1&1&0&0&0&0&0&0&0&0&0&0&0&0&0&0\\
1&0&1&1&1&0&0&0&2&2&2&0&0&0&2&0\\
1&0&1&1&0&0&0&0&2&0&0&0&0&0&0&0\\
1&0&1&0&1&0&0&0&0&2&0&0&0&0&0&0\\
1&0&1&0&0&0&0&0&0&0&0&0&0&0&0&0\\
1&0&0&1&1&0&0&0&0&0&2&0&0&0&0&0\\
1&0&0&1&0&0&0&0&0&0&0&0&0&0&0&0\\
1&0&0&0&1&0&0&0&0&0&0&0&0&0&0&0\\
1&0&0&0&0&0&0&0&0&0&0&0&0&0&0&0
\end{array}\right].
\end{align*}

\bibliography{lib}% common bib file
%% if required, the content of .bbl file can be included here once bbl is generated
%%\input sn-article.bbl

%% Default %%
%%\input sn-sample-bib.tex%

\end{document}